\definecolor{myurlcolor}{rgb}{0,0,0.9}
\newcommand{\proj}[1]{| #1\rangle\!\langle #1 |}
\newcommand{\inner}[2]{\langle #1 , #2\rangle}
\DeclareMathOperator{\trace}{Tr}
\DeclareMathOperator{\e}{e}
\newcommand{\Ptr}[2]{\trace_{#1}\Pa{#2}}
\newcommand{\Tr}[1]{\Ptr{}{#1}}
\newcommand{\Pa}[1]{\left[#1\right]}
\newcommand{\norm}[1]{\left\lVert #1 \right\rVert}
\theoremstyle{plain}
\newtheorem{thm}{Theorem}
\newtheorem{lem}[thm]{Lemma}
\newtheorem{prop}[thm]{Proposition}
\newcommand*{\myproofname}{Proof}
\def\ot{\otimes}
\def\real{\mathbb{R}}
\def \diag {\mathrm{diag}}
\DeclareMathAlphabet{\mathcal}{OMS}{cmsy}{m}{n}
\begin{document}

  \author{Kaifeng Bu}
 \email{kfbu@fas.harvard.edu}
\affiliation{Department of Physics, Harvard University, Cambridge, Massachusetts 02138, USA}

\author{Dax Enshan Koh}
  \email{dax\_koh@ihpc.a-star.edu.sg}
\affiliation{Institute of High Performance Computing, Agency for Science, Technology and Research (A*STAR), 1 Fusionopolis Way, \#16-16 Connexis, Singapore 138632, Singapore}

\author{Lu Li}
\affiliation{Department of Mathematics, Zhejiang Sci-Tech University, Hangzhou, Zhejiang 310018, China}
\affiliation{School of Mathematical Sciences, Zhejiang University, Hangzhou, Zhejiang 310027, China}

\author{Qingxian Luo}
\affiliation{School of Mathematical Sciences, Zhejiang University, Hangzhou, Zhejiang 310027, China}
\affiliation{Center for Data Science, Zhejiang University, Hangzhou Zhejiang 310027, China}

\author{Yaobo Zhang}
\affiliation{Zhejiang Institute of Modern Physics, Zhejiang University, Hangzhou, Zhejiang 310027, China}
\affiliation{Department of Physics, Zhejiang University, Hangzhou Zhejiang 310027, China}

\title{Effects of quantum resources on the statistical complexity of quantum circuits}

\begin{abstract}

We investigate how the addition of quantum resources changes the statistical complexity of quantum circuits by utilizing the framework of quantum resource theories. Measures of statistical complexity that we consider include the Rademacher complexity and the Gaussian complexity, which are well-known measures in computational learning theory that quantify the richness of classes of real-valued functions. We derive bounds for the statistical complexities of quantum circuits that have limited access to certain resources and apply our results to two special cases: (1) stabilizer circuits that are supplemented with a limited number of T gates and (2) instantaneous quantum polynomial-time Clifford circuits that are supplemented with a limited number of CCZ gates. We show that
the increase in the statistical complexity of a quantum circuit when an additional quantum channel is added to it is upper bounded by the free robustness of the added channel. Finally, we derive bounds for the generalization error associated with learning from training data arising from quantum circuits.

\end{abstract}

\maketitle

\section{Introduction}

Quantum machine learning, which aims to harness the power of quantum computing to perform machine learning tasks, has attracted considerable interest in recent years \cite{lloyd2013quantum,wittek2014quantum, BiamonteNature17,ciliberto2018quantum,Dunjko2018}. This interest is accompanied by the hope that quantum algorithms can outperform their classical counterparts at solving certain machine learning problems. This hope is fuelled, in part, both by the observation that quantum computers are capable of efficiently producing patterns in data that classical computers are believed to not be able to produce efficiently \cite{bremner2010classical,
aaronson2011computational,
BremnerPRL2016,
dalzell2020many} and by the proposal of quantum algorithms with a provable exponential speedup over known classical algorithms that may be adapted for use as subroutines in certain machine learning algorithms. An example of such an algorithm is the Harrow-Hassidim-Lloyd (HHL) algorithm \cite{Harrow08} for solving linear systems of equations, which has been applied to various machine learning problems, like recommendation systems \cite{kerenidis2016quantum}, support vector machines \cite{Rebentrost14}, principal component analysis \cite{Lloyd14}, etc.

Central to many quantum machine learning algorithms is the need to train quantum variational circuits to perform certain tasks. These circuits are the central building block used in variational quantum algorithms, which have been described as a leading candidate for achieving a practical quantum advantage using noisy intermediate-scale quantum (NISQ \cite{preskill2018quantum,bharti2021noisy}) devices \cite{cerezo2020variational}. Examples of variational quantum algorithms include the variational quantum eigensolver (VQE) for quantum chemistry \cite{peruzzo2014variational,cao2019quantum,Lee2019}, the quantum approximate optimization algorithm (QAOA) for optimization \cite{farhi2014quantum}, and the quantum neural network (QNN) that generalizes the classical neural network  \cite{farhi2018classification,mitarai2018quantum,Schuld2019,havlivcek2019supervised,Sharma2020,beer2020training}.

While quantum circuits are believed to provide an advantage over their classical counterparts, not all of them are capable of doing so. There are several well-known restricted classes of quantum circuits that can be shown to be efficiently simulable by a classical computer. These include the stabilizer circuits \cite{gottesman1997heisenberg}, the matchgate circuits \cite{Valiant02,Jozsa08} as well as these circuits augmented with various supplementary resources \cite{jozsa2014classical,
koh2015further,
brod2016efficient, bu2019efficient,
hebenstreit2020computational}. These classical simulation results show that if one hoped to outperform classical algorithms at a given machine learning task, it is necessary to utilize resources outside these classically simulable restricted classes of quantum circuits.

A key step in both classical and quantum machine learning is the building of learning models based on training data. Here, the power of a learning model depends on its statistical complexity, i.e., its ability to fit functions, which has been quantified by various measures. These measures include the Vapnik-Chervonenkis (VC) dimension \cite{Vapnik71,Vapnik82} (which has been used to determine the sample complexity of
PAC learning \cite{Blumer89} and classical neural networks \cite{Harvery17}), the metric entropy (also known as 
covering number) \cite{tikhomirov1993varepsilon}, the Rademacher complexity \cite{Bartlett03} (which has been studied in the context of classical neural networks \cite{Neyshabur15,Bartlett17,Neyshabur2017,Golowich18a})  and the Gaussian complexity \cite{Bartlett03}.

In building learning models using quantum circuits, various quantum resources, or \textit{quantum effects}, are typically at play. These include magic \cite{Veitch14,howard_2017,wang2019quantifying}, entanglement \cite{Horodeck09,Plenio07} and coherence \cite{aberg2006quantifying,baumgratz14,Streltsov17}. But how do changes in the amounts of these quantum resources affect the statistical complexity of these quantum-circuit-based learning models? In recent work \cite{bu2021on}, we partially addressed this question by focusing on a specific resource, namely the resource of magic. In particular, we utilized the $(p,q)$ group norm to quantify the amount of magic in quantum circuits and showed how the statistical complexity of the quantum circuit scales with the depth and width of the circuit and the amount of magic it contains. In this work, 
we extend our previous results and address the above question more generally by considering the Rademacher and Gaussian complexities as measures of model complexity and  utilizing the framework of general resource theories \cite{COECKE16,chitambar_2019}, which offers a powerful paradigm for the quantification and operational interpretation of quantum effects \cite{howard_2017}.

In a quantum resource theory, quantum channels are categorized as being either a \textit{free} channel or a \textit{resource} channel. Free channels are those that are available or inexpensive and resource channels are those that are limited or expensive to use. 
In this work, we consider quantum-circuit-based learning models in the following two contexts: (1) quantum circuits with access to only a restricted set of channels $\mathcal O$, and (2) quantum circuits with access to a restricted set of channels $\mathcal O$ together with an additional resource channel $\Psi \in \mathcal O$ (for example, we could take $\mathcal O$ to be the set of stabilizer circuits and $\Psi$ to be the $T$ gate). We show that by adding a resource channel to a set of free channels, the Radamacher and Gaussian complexities are increased by an amount that is bounded by the free robustness of the resource channel multiplied by the number of times the channel is used. Using this result, we derive an upper bound on the generalization error associated with learning from the training data arising from such circuits.

\section{Preliminaries}

\subsection{Quantum-generated function classes}

Consider an $n$-qubit quantum circuit that implements a quantum channel\footnote{Subsequently, we will identify the circuit with the channel it implements and denote both by $\Phi$.} $\Phi$. For example, $\Phi=\Phi(\theta)$ could represent a parametrized quantum circuit with gates parametrized by the parameters $\theta \in \mathbb R^\alpha$ (for an example, see Fig.~\ref{fig1}).
Let $\vec{x}\in \mathbb{F}^n_2$ be an $n$-bit input string (for example, $\vec{x}$ could be the binary representation of a collection of pixel values of an image of a handwritten digit). By Born's rule, if we feed the computational basis state $\ket{\vec{x}}$ into the circuit $\Phi$ and make a measurement in the computational basis, the probability of measuring $\vec{y}\in \mathbb{F}^n_2$ is given by
\begin{align}
\label{def:fun}
    p_{\Phi,\vec{x}}(\vec{y})=f_\Phi(\vec x,\vec y):=\Tr{\Phi(\proj{x})\proj{y}}. 
\end{align}
where $f_\Phi:\mathbb{F}^n_2\times \mathbb{F}^n_2\to [0,1]$ is a real-valued function induced by the channel $\Phi$ that maps input-output pairs $(\vec x,\vec y)$ to probability values. Let $\Omega$ be a set of quantum channels. We define the function class $\mathcal F(\Omega)$ as follows:
\begin{align}
\label{eq:functionClass}
    \mathcal F(\Omega) = \{
    f_\Phi|
    \Phi \in \Omega
    \}.
\end{align}
\begin{figure}[t]
  \center{\includegraphics[width=8cm]  {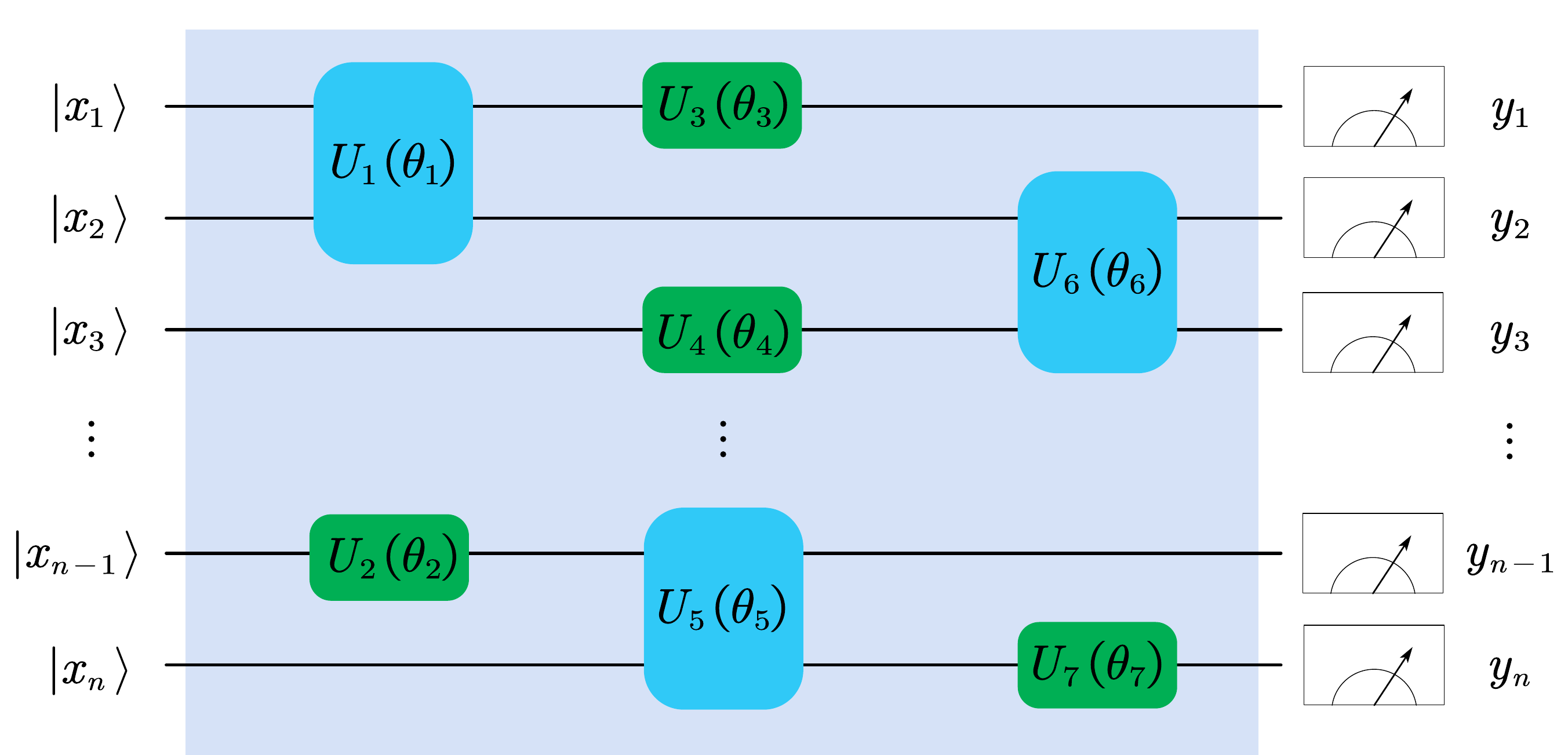}}     
  \caption{
  An example of a parametrized quantum circuit with parametrized gates.}
  \label{fig1}
 \end{figure}

\subsection{Statistical complexity}

We now introduce the Rademacher and Gaussian complexities \cite{Bartlett03}, which quantify the richness of sets of real-valued functions and can be used to provide bounds for the generalization error associated with learning from training data.
Let $\mathcal G$ be a set of real-valued functions and let $S=(z_1,\ldots,z_m) \in \mathbb R^m$ be a set of $m$ samples. The \textit{(empirical) Rademacher complexity} of $\mathcal G$ with respect to $S$ is
\begin{align}
    \hat{R}_S(\mathcal{G})=
\mathop{\mathbb{E}} \limits_{\substack{\epsilon_1,\ldots,\epsilon_m 
\\
\sim \mathrm{Rad}}} \left[
\sup_{g\in \mathcal{G}}
\frac{1}{m}\sum^m_{i=1}\epsilon_i g(z_i)
\right],
\end{align}
where the expectation is taken over i.i.d.~Rademacher random variables, i.e., $\epsilon_i \sim \mathrm{Rad}$ for each $i \in \{1,\ldots,m\}$. Recall that the Rademacher random variable $X$ has probability mass function
\begin{align}
    \mathrm{Pr}(X=k) = \begin{cases}
    1/2 & k \in \{-1,1\}, \\
    0 & \mbox{otherwise}.
    \end{cases}
\end{align}

Similarly, the \textit{(empirical) Gaussian complexity} of $\mathcal G$ with respect to $S$ is
\begin{align}
    \hat{G}_S(\mathcal{G})=
\mathop{\mathbb{E}} \limits_{\substack{\epsilon_1,\ldots,\epsilon_m 
\\
\sim \mathcal N(0,1)}} \left[
\sup_{g\in \mathcal{G}}
\frac{1}{m}\sum^m_{i=1}\epsilon_i g(z_i)
\right],
\end{align}
where the expectation is taken over the i.i.d.~random Gaussian variables with zero mean and unit variance, i.e., $\epsilon_i\sim \mathcal{N}(0,1)$ for each $i \in \{1,\ldots,m\}$.

Note that the empirical Rademacher and Gaussian complexities depend on the samples $S=(z_1,\ldots,z_m)$. By averaging over samples $S$ taken from a product distribution $D^m$, we obtain the \textit{expected Rademacher} and \textit{Gaussian complexities}:
\begin{align}
R_D(\mathcal{G})&=\mathop{\mathbb{E}}\limits_{S\sim D^m}\left[\hat{R}_S(\mathcal{G})\right],\\
G_D(\mathcal{G})&=\mathop{\mathbb{E}}\limits_{S\sim D^m}\left[\hat{G}_S(\mathcal{G})\right].
\end{align}
In the rest of the main text, we will focus on the Rademacher complexity; similar results hold for the Gaussian complexity, which we relegate to Appendix \ref{sec:GaussianComplexity}.

\subsection{Statistical complexity in the quantum resource theory framework}

Quantum resource theories are characterized by a restricted set of channels, called \textit{free channels}, which map free states to free states; any channel that is not a free channel is called a \textit{resource channel}. 
Let $\mathcal O$ be a set of $n$-qubit free channels and let $\Psi \notin \mathcal O$ be an $n$-qubit resource channel. Define $\mathcal{O}_{\Psi}:=\mathcal{O}\cup \set{\Psi}$ to be the class of channels formed by appending $\Psi$ to $\mathcal O$. In addition, to take into account the case where the resource channel is used more than once, for each $k\in \mathbb Z^+$, define 
\begin{align}
\mathcal{O}^{(k)}_{\Psi}
=\bigg\{\prod^l_{i=1}\Phi_i \bigg| 
l=\mathsf{poly}(n) ;
\Phi_i\in \mathcal{O}_{\Psi}
\ \forall i \in \{1,\ldots, l\}; \nonumber\\
\text{ and at most $k$ of the $\Phi_i$'s are $\Psi$}\bigg\}.
\label{eq:defOpsiK}
\end{align}
It is easy to see that the above sets form a nested hierarchy
\begin{align}
\label{eq:hierarchySets}
    \mathcal O \subset \mathcal{O}_{\Psi}\subset \mathcal{O}^{(1)}_{\Psi} \subset \mathcal{O}^{(2)}_{\Psi}\subset
    \ldots
    \subset
    \mathcal{O}^{(k)}_{\Psi}\subset
    \mathcal{O}^{(k+1)}_{\Psi}
    \subset \ldots
\end{align} 

In this work, we will be interested in the statistical complexities of the function classes $\mathcal{F}(\Omega)$ formed by taking $\Omega$ to be the sets in the nested hierarchy in Eq.~\eqref{eq:hierarchySets}, where $\mathcal{F}(\cdot)$ is given by Eq.~\eqref{eq:functionClass}.

\section{Results}

\subsection{Statistical complexity bounds}

We first consider the Rademacher complexity of $\mathcal{F}(\mathcal{O}_{\Psi})$.
\begin{thm}\label{thm:main1}
Given $m$ independent samples $S=(\vec{z}_1,\ldots,\vec{z}_m)$ and a resource channel $\Psi$, the Rademacher complexity of $\mathcal{F}(\mathcal{O}_{\Psi})$ is bounded as follows:
\begin{eqnarray}
\hat{R}_S(\mathcal{F}(\mathcal{O}))
\leq
\hat{R}_S(\mathcal{F}(\mathcal{O}_{\Psi}))
\leq (1+\gamma(\Psi)) \hat{R}_S(\mathcal{F}(\mathcal{O})),
\end{eqnarray}
where $\gamma(\Psi)$ is the free robustness of $\Psi$ with respect to the set $\mathcal{O}$, defined as
\begin{eqnarray*}
\gamma(\Psi):=\min\left\{\lambda \bigg|
\exists\Phi\in \mathrm{Conv}(\mathcal{O}):
\frac{\Psi+\lambda\Phi}{1+\lambda}\in \mathrm{Conv}(\mathcal{O})\right\}.
\end{eqnarray*}
 Therefore, for any probability distribution $D$ on the sample space, if each sample $\vec{z}_i$ is chosen independently 
according to $D$ for $i=1,\ldots,m$, we have 

\begin{eqnarray}
R_D(\mathcal{F}(\mathcal{O}))
\leq
R_D(\mathcal{F}(\mathcal{O}_{\Psi}))
\leq (1+\gamma(\Psi)) R_D(\mathcal{F}(\mathcal{O})).
\end{eqnarray}

\end{thm}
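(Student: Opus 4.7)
The plan is to prove the two inequalities separately. The lower bound $\hat{R}_S(\mathcal{F}(\mathcal{O})) \leq \hat{R}_S(\mathcal{F}(\mathcal{O}_{\Psi}))$ is immediate from the inclusion $\mathcal{F}(\mathcal{O}) \subseteq \mathcal{F}(\mathcal{O}_{\Psi})$ and the monotonicity of the empirical Rademacher complexity: enlarging the function class over which one takes a supremum can only increase the value.

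For the upper bound, I would first unpack the definition of the free robustness to extract a decomposition of $\Psi$. Writing $\gamma := \gamma(\Psi)$, there exist $\Phi,\Phi' \in \mathrm{Conv}(\mathcal{O})$ with $(1+\gamma)\Phi' = \Psi + \gamma \Phi$, i.e., $\Psi = (1+\gamma)\Phi' - \gamma \Phi$. Since $\Phi \mapsto f_{\Phi}$ is linear in $\Phi$ (it is simply the trace of $\Phi(\proj{\vec{x}})$ against $\proj{\vec{y}}$), this yields the pointwise decomposition $f_{\Psi} = (1+\gamma) f_{\Phi'} - \gamma f_{\Phi}$. Using $\mathcal{O}_{\Psi} = \mathcal{O} \cup \{\Psi\}$, for any realization $\epsilon=(\epsilon_1,\ldots,\epsilon_m)$ of the Rademacher variables,
\begin{align*}
\sup_{g \in \mathcal{F}(\mathcal{O}_{\Psi})} \frac{1}{m}\sum_{i=1}^m \epsilon_i g(\vec{z}_i) = \max\{A(\epsilon),\, B(\epsilon)\},
\end{align*}
where $A(\epsilon) := \sup_{g\in\mathcal{F}(\mathcal{O})} \frac{1}{m}\sum_i \epsilon_i g(\vec{z}_i)$ and $B(\epsilon) := \frac{1}{m}\sum_i \epsilon_i f_{\Psi}(\vec{z}_i)$. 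Because $\Phi'$ and $\Phi$ lie in $\mathrm{Conv}(\mathcal{O})$, a standard convexity argument (any convex combination of the corresponding $f_{\Phi_j}$'s is bounded by the pointwise maximum over its constituents) gives $\frac{1}{m}\sum_i \epsilon_i f_{\Phi'}(\vec{z}_i) \leq A(\epsilon)$ and $v(\epsilon) := \frac{1}{m}\sum_i \epsilon_i f_{\Phi}(\vec{z}_i) \leq A(\epsilon)$ pointwise. Substituting into $B$ produces the pointwise inequality $B(\epsilon) - A(\epsilon) \leq \gamma\,(A(\epsilon) - v(\epsilon))$, whose right-hand side is non-negative.

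The main obstacle I anticipate is obtaining the tight constant $(1+\gamma)$, rather than the loose $(1+2\gamma)$ that a naive triangle-type bound on $f_{\Psi}=(1+\gamma)f_{\Phi'}-\gamma f_{\Phi}$ would produce. The resolution is to postpone taking the expectation and to use the identity $\max(a,b) = a + \max(0,b-a)$: combined with the pointwise estimate above, this gives $\mathbb{E}_\epsilon[\max(0,B-A)] \leq \gamma(\mathbb{E}_\epsilon[A] - \mathbb{E}_\epsilon[v])$. Because $\Phi$ is a fixed channel determined by the decomposition of $\Psi$ (independent of $\epsilon$), the zero-mean symmetry of the Rademacher variables forces $\mathbb{E}_\epsilon[v] = 0$, which kills the extra $\gamma$ contribution and leaves $\hat{R}_S(\mathcal{F}(\mathcal{O}_{\Psi})) \leq (1+\gamma)\hat{R}_S(\mathcal{F}(\mathcal{O}))$. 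The expected-complexity inequality for $R_D$ then follows immediately by averaging the empirical bound over $S \sim D^m$ and invoking linearity of expectation.
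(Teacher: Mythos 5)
Your proof is correct, and while it shares the same skeleton as the paper's (monotonicity for the lower bound; the robustness decomposition $\Psi=(1+\gamma)\Phi'-\gamma\Phi$ plus linearity of $\Phi\mapsto f_\Phi$ for the upper bound), the way you control the excess term is genuinely different and simpler. The paper partitions the sign vectors into the set $A=\{\vec\epsilon : \langle\vec\epsilon,\vec f_\Psi\rangle>\sup_{f\in\mathcal F(\mathcal O)}\langle\vec\epsilon,\vec f\rangle\}$ and its complement, and then needs two auxiliary lemmas --- that $A\cap(-A)=\emptyset$ and that the sum of suprema over $A\cup(-A)$ is dominated by the sum over all of $\{\pm1\}^m$ --- to absorb the leftover $\gamma\sum_{\vec\epsilon\in A}[\langle\vec\epsilon,f_{\Phi'}\rangle-\langle\vec\epsilon,f_{\Phi}\rangle]$ into $\gamma\hat R_S(\mathcal F(\mathcal O))$. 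You instead write $\max(a,b)=a+\max(0,b-a)$, bound $\max(0,B-A)\le\gamma(A-v)$ pointwise over \emph{all} $\vec\epsilon$ (valid since $v\le A$ makes the right side nonnegative), and then observe $\mathbb E_{\vec\epsilon}[v]=0$ because $\Phi$ is fixed and the Rademacher variables are centered. Both arguments ultimately exploit the $\vec\epsilon\mapsto-\vec\epsilon$ symmetry, but yours does so through the single identity $\mathbb E[v]=0$ rather than through set-theoretic bookkeeping, and it transfers verbatim to the Gaussian case (where the paper must redefine $A$ over $\mathbb R^m$), since only zero mean of the $\epsilon_i$ is used. The one point worth stating explicitly in a written-up version is the convexity step $\langle\vec\epsilon,\vec f_{\Phi'}\rangle\le A(\vec\epsilon)$ for $\Phi'\in\mathrm{Conv}(\mathcal O)$: it needs linearity of $f_{(\cdot)}$ to write $f_{\Phi'}$ as a convex combination of free $f_{\Phi_j}$'s, exactly as the paper does.
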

The proof of Theorem \ref{thm:main1} is presented in Appendix \ref{proof:main1}.
Theorem \ref{thm:main1} tells us that with access to the resource channel, the Rademacher complexity is bounded by
the free robustness of the channel.

Next, let us consider the case where the resource channel can be used multiple times. In this case, the relevant function class is that defined by Eq.~\eqref{eq:defOpsiK}. By Eq.~\eqref{eq:hierarchySets}, the following relationship follows immediately

\begin{eqnarray}
\hat{R}_S(\mathcal{O}^{(k)}_{\Psi})
\leq \hat{R}_S(\mathcal{O}^{(k+1)}_{\Psi}).
\end{eqnarray}

\begin{thm}\label{thm:main2}
Given $m$ independent samples $S=(\vec{z}_1,\ldots,\vec{z}_m)$ and a resource channel 
 $\Psi$, we have the following bound 
\begin{align}
\hat{R}_S(\mathcal{F}(\mathcal{O}^{(k)}_{\Psi}))
\leq \gamma^* \hat{R}_S(\mathcal{F}(\mathcal{O})),
\end{align}
where $\gamma^*=\min\set{1+2\gamma_{\max,n}, (1+2\gamma(\Psi))^k}$, and $\gamma_{\max,n}$ is the maximal free robustness over quantum channels on $n$ qubits. Therefore for any probability distribution $D$ on the sample space, with each sample $\vec{z}_i$ chosen independently 
according to $D$ for $i=1,\ldots,m$, we have 
\begin{align}
R_D(\mathcal{F}(\mathcal{O}^{(k)}_{\Psi}))
\leq \gamma^*R_D(\mathcal{F}(\mathcal{O}))
.
\end{align}

\end{thm}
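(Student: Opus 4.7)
The plan is to establish $\hat R_S(\mathcal{F}(\mathcal{O}^{(k)}_\Psi))\le\gamma^*\hat R_S(\mathcal{F}(\mathcal O))$ by proving the two constituent bounds $(1+2\gamma(\Psi))^k$ and $1+2\gamma_{\max,n}$ separately, then taking the minimum; the bound on $R_D$ follows immediately by taking $\mathbb E_{S\sim D^m}$ on both sides, since $\gamma^*$ is $S$-independent. The first constituent will arise from a layered decomposition of each occurrence of $\Psi$, the second from viewing the entire composed circuit as a single $n$-qubit channel and invoking the universal bound $\gamma_\Phi\le\gamma_{\max,n}$ on free robustness.

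For the $(1+2\gamma(\Psi))^k$ bound, I first write an arbitrary $\Phi\in\mathcal O^{(k)}_\Psi$, after grouping consecutive free channels (using that $\mathcal O$ is closed under composition) and padding with identities if necessary, as $\Phi=\Phi_{k+1}\Psi\Phi_k\Psi\cdots\Psi\Phi_1$ with each $\Phi_i\in\mathcal O$. By the definition of the free robustness, I fix $\Phi^+,\Phi^-\in\mathrm{Conv}(\mathcal O)$ so that $\Psi=(1+\gamma)\Phi^+-\gamma\Phi^-$, where $\gamma\equiv\gamma(\Psi)$. Substituting this identity into each of the $k$ copies of $\Psi$ and expanding the product yields
\begin{align*}
\Phi=\sum_{s\in\{+,-\}^k}c_s\,\hat\Phi^{(s)},
\end{align*}
where $c_s=\prod_j c_{s_j}$ with $c_+=1+\gamma$ and $c_-=-\gamma$, each $\hat\Phi^{(s)}\in\mathrm{Conv}(\mathcal O)$, and the crucial factorization $\sum_s|c_s|=\prod_j((1+\gamma)+\gamma)=(1+2\gamma)^k$.

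Since $f_\Phi$ is linear in $\Phi$, the Rademacher sum becomes $\sum_i\epsilon_i f_\Phi(\vec z_i)=\sum_s c_s T_s(\Phi,\epsilon)$ with $T_s(\Phi,\epsilon)=\sum_i\epsilon_i f_{\hat\Phi^{(s)}(\Phi)}(\vec z_i)$. Splitting $c_s=c_s^+-c_s^-$ into positive and negative parts and using that for each fixed sign pattern $s$ the image $\{\hat\Phi^{(s)}(\Phi):\Phi\in\mathcal O^{(k)}_\Psi\}$ lies inside $\mathrm{Conv}(\mathcal O)$, the sup over $\Phi$ is dominated by the sum of $(\sum_s c_s^+)\sup_{\Phi'\in\mathrm{Conv}(\mathcal O)}\sum_i\epsilon_i f_{\Phi'}(\vec z_i)$ and $(\sum_s c_s^-)\sup_{\Phi'\in\mathrm{Conv}(\mathcal O)}[-\sum_i\epsilon_i f_{\Phi'}(\vec z_i)]$. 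Taking $\mathbb E_\epsilon$, Rademacher symmetry (invariance under $\epsilon\mapsto-\epsilon$) together with the standard identity $\hat R_S(\mathcal F(\mathrm{Conv}(\mathcal O)))=\hat R_S(\mathcal F(\mathcal O))$ makes both sup-expectations equal to $m\hat R_S(\mathcal F(\mathcal O))$, so summing gives $(1+2\gamma(\Psi))^k\,m\hat R_S(\mathcal F(\mathcal O))$ as required.

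For the second constituent, any $\Phi\in\mathcal O^{(k)}_\Psi$ is itself a single $n$-qubit channel and so admits a decomposition $\Phi=(1+\gamma_\Phi)\Phi^+_\Phi-\gamma_\Phi\Phi^-_\Phi$ with $\Phi^\pm_\Phi\in\mathrm{Conv}(\mathcal O)$ and $\gamma_\Phi\le\gamma_{\max,n}$; rerunning the signed-coefficient argument at a single layer, and using $\gamma_\Phi\le\gamma_{\max,n}$ to bound coefficient magnitudes uniformly in $\Phi$, yields the companion bound $(1+2\gamma_{\max,n})\hat R_S(\mathcal F(\mathcal O))$. Combining via $\min$ delivers $\gamma^*$, and then $\mathbb E_{S\sim D^m}$ on both sides gives the distributional statement. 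The main technical subtlety is the $\Phi$-dependence of the decomposition — both the components $\hat\Phi^{(s)}(\Phi)$ in the first argument and the scalar $\gamma_\Phi$ in the second vary with $\Phi$, so that taking $\sup_\Phi$ naively couples them to the signed coefficients — and this is resolved precisely by the decoupling observation above combined with Rademacher symmetry, which together convert the signed expansion into the $\ell_1$-mass of the coefficients times the base Rademacher complexity.
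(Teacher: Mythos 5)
Your proof is correct and follows essentially the same route as the paper's: decompose each use of $\Psi$ (respectively, the whole composed circuit) via its free robustness into a signed affine combination of elements of $\mathrm{Conv}(\mathcal{O})$, then control the result with the scaling, convex-hull-invariance, and additivity/symmetry properties of the Rademacher complexity, finally taking the minimum of the two resulting factors. The only difference is presentational: the paper obtains the $(1+2\gamma(\Psi))^k$ factor by iterating a one-step recursion $\hat{R}_S(\mathcal{F}(\mathcal{O}^{(k+1)}_{\Psi}))\leq(1+2\gamma(\Psi))\hat{R}_S(\mathcal{F}(\mathcal{O}^{(k)}_{\Psi}))$ via the Minkowski-sum identity $\hat{R}(A_1+A_2)=\hat{R}(A_1)+\hat{R}(A_2)$, whereas you unroll all $k$ substitutions at once into a $2^k$-term signed expansion and bound it by the $\ell_1$ mass of the coefficients.
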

The proof of Theorem \ref{thm:main2} is presented in Appendix \ref{apen:prof_main2}.
Theorem \ref{thm:main2} tells us that the Rademacher complexity for the case where we have access to multiple copies of a resource channel
has an upper bound that depends on the Rademacher complexity for the case where there is no resource channel, the free robustness of the resource channel $\Psi$ and the number of times
$\Psi$ is used.

Now, let us give some examples to illustrate our results.

\noindent\textit{\textbf{Example 1}}:
Consider quantum circuits whose gates all belong to the Clifford group, and denote the set of Clifford channels associated with such circuits by 
 $\mathcal{STAB}$. Of interest to us is the Rademacher complexity of  $\mathcal F(\mathcal{STAB})$ with respect to $m$
independent samples $S=(\vec{z}_i)^m_{i=1}$, denoted by 
$\hat{R}_S(\mathcal{F}(\mathcal{STAB}))$.
As we shall show in Appendix \ref{apen:prof_exam}, we get the following bound for stabilizer circuits
\begin{eqnarray}
\hat{R}_S(\mathcal{F}(\mathcal{STAB}))
\leq 4\frac{(1+o(1))n}{\sqrt{m}} \max_{\Phi\in \mathcal{STAB}}\norm{\vec{f}_{\Phi}}_{\infty},
\end{eqnarray}
where $\vec{f}_{\Phi}=(f_{\Phi}(\vec{z}_i))^m_{i=1}$.
Now, while such circuits
can be efficiently simulated on a classical computer, by the Gottesman-Knill theorem \cite{gottesman1997heisenberg},
circuits formed from the
Clifford+$T$ universal gate set, where
$T=\diag[1,e^{i\pi/4}]$, are believed to preclude efficient classical simulation \cite{terhal2004adaptive,nest2010classical}. This motivates us to consider quantum circuits consisting of both Clifford gates and the $T$ gate. We define the set $\mathcal{STAB}^{(k)}_{T}$ to be the set of quantum channels formed from Clifford unitaries and at most $k$ $T$ gates. As we shall show in Appendix \ref{apen:prof_exam}, the following 
upper bound holds for the Rademacher complexity of $\mathcal{STAB}^{(k)}_{T}$:
\begin{eqnarray}
\nonumber \hat{R}_S(\mathcal{F}(\mathcal{STAB}^{(k)}_{T}))
&\leq&  \left(1+\sqrt{2}\right)^k\hat{R}_S(\mathcal{F}(\mathcal{STAB}))\\
&\leq& O\left(\left(1+\sqrt{2}\right)^k\frac{n}{\sqrt{m}}\right)\max_{\Phi\in \mathcal{STAB}}\norm{\vec{f}_{\Phi}}_{\infty},
\nonumber\\
\end{eqnarray}
where we used the fact that the free robustness of the $T$ gate is upper bounded by $\sqrt{2}/2$.

\noindent\textit{\textbf{Example 2}}:
Consider the instantaneous quantum polynomial-time (IQP) circuits, a restricted model of quantum computation that has been proposed as a candidate for demonstrating quantum computational supremacy in the near term \cite{BremnerPRL2016,Bremner2017quantum,bremner2010classical,dalzell2020many}.
The structure of IQP circuits is quite simple: each circuit
has the form $H^{\ot n} DH^{\ot n}$, where
$D$ is a subcircuit with gates chosen from 
$\set{Z, CZ, CCZ}$ (see Fig.~\ref{fig2}).
Let us define $\mathcal{I}$ to be the set of IQP circuits
for which the gates in $D$ are from the gate set $\set{Z, CZ}$ and which contains at least one $CZ$ gate (the case in which the circuits do not contain a $CZ$ gate is trivial). Note that each circuit in $\mathcal I$ is also a Clifford circuit.
Moreover, $\mathcal{I}$ is a finite set, and the size of $\mathcal{I}$ is $O(2^{n^2})$. 
Thus, we have the following bound 
\begin{eqnarray}
\hat{R}_S(\mathcal{F}(\mathcal{I}))\leq \frac{O(n)}{\sqrt{m}}\max_{\Phi\in \mathcal{I}}\norm{\vec{f}_{\Phi}}_{\infty}
.
\end{eqnarray}
While $\mathcal{I}$ can be 
efficiently simulated on a classical computer \cite{gottesman1997heisenberg}, IQP circuits formed from the gate set $\mathcal{I}+CCZ$ 
are hard to simulate classically \cite{BremnerPRL2016,Bremner2017quantum,bremner2010classical}, which motivates us to consider the set $\mathcal{I}^{(k)}_{CCZ}$ of IQP circuits with at most 
$k$ CCZ gates. As we shall show in Appendix \ref{apen:prof_exam}, the following bound holds:
\begin{eqnarray}
\hat{R}_S(\mathcal{F}(\mathcal{I}^{(k)}_{CCZ}))\leq \frac{O((n^2+k\log n)^{1/2})}{\sqrt{m}}\max_{\Phi\in \mathcal{I}^{(k)}_{CCZ}}\norm{\vec{f}_{\Phi}}_{\infty}.
\end{eqnarray}
.

\begin{figure}[!h]
  \center{\includegraphics[width=6cm]  {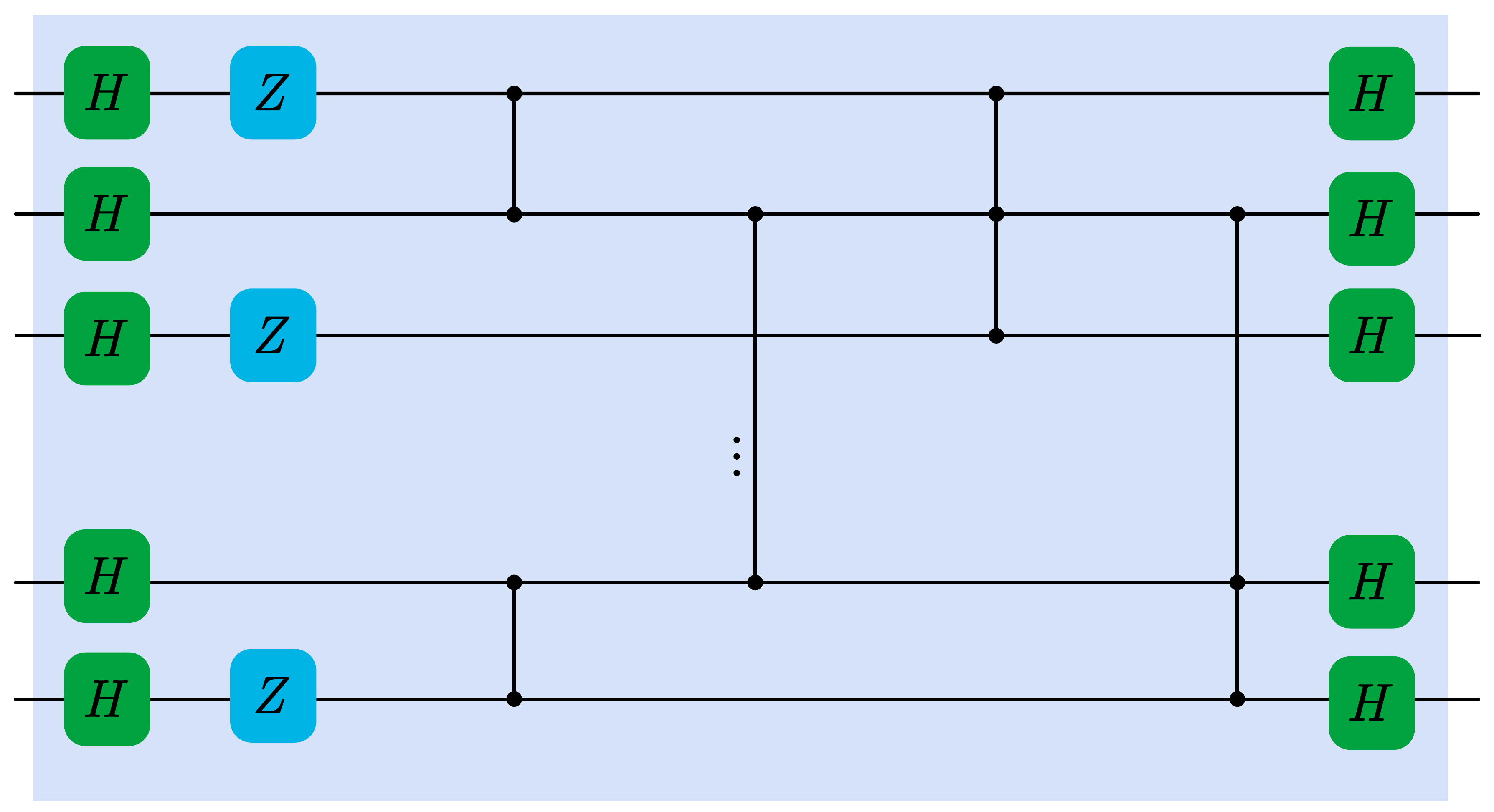}}     
  \caption{An example of an IQP circuit, which has the form $H^{\ot n}DH^{\ot n}$, where the gates in $D$ may be chosen only from 
  the gate set $\set{Z, CZ, CCZ}$.}
  \label{fig2}
 \end{figure}

\subsection{Generalization error bounds}

Given a sample $\vec{z}=(\vec{x}, \vec{y})$ (e.g., $\vec{y}=g(\vec{x})$ for some unknown function $g$),
let us consider the 
loss function 
$l(\vec{z}_i, \Phi)=1-f_{\Phi}(\vec{z}_i)$ where
$f_{\Phi}$ is defined by Eq.~\eqref{def:fun}. 
Then the expected error with respect to some unknown probability 
distribution $D$ on $Z^n_2\times Z^n_2$ is 
\begin{eqnarray}
er_D(\Phi)=\mathbb{E}_{\vec{z}\sim D}
l(\vec{z}, \Phi)
\end{eqnarray}
Given $m$ independent samples $S=(\vec{z}_1,\ldots,\vec{z}_m)$, the empirical error
is 
\begin{eqnarray}
er_S(\Phi)
=\frac{1}{m}
\sum_i l(\vec{z}_i,\Phi).
\end{eqnarray}
The difference between $er_S$ and $er_D$ is called the \textit{generalization error}, which determines the performance of the  function 
$f$ on the unseen data drawn from the unknown probability distribution. 
The Rademacher complexity provides a bound on the generalization error by the following result.

\begin{lem}[\cite{Bartlett03}]
If the loss function $l(f(\vec{x}),\vec{y})$ takes values in $[0,B]$, then 
 for any $\delta>0$, the following statement holds
for any function $f\in \mathcal{F}$ with probability at least $1-\delta$:
\begin{eqnarray*}
er_D(f)\leq er_S(f)
+2B \hat{R}_S(l_{\mathcal{F}})
+3B\sqrt{\frac{\log(2/\delta)}{2m}}
\end{eqnarray*}
 where the function class $l_{\mathcal{F}}:=\set{l_f: (\vec{x},\vec{y})\to l(f(\vec{x}),\vec{y})| f\in \mathcal{F}} $,
and  $\hat{R}_S(l_{\mathcal{F}})$ is the Rademacher complexity of the function class $l_{\mathcal{F}}$ on the $m$ given samples 
$S=\set{(\vec{x}_i, \vec{y}_i)}^m_{i=1}$. 
 
\end{lem}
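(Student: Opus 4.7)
The plan is to combine two standard ingredients from empirical process theory: a bounded-differences concentration inequality (McDiarmid) to convert an expectation statement into a high-probability one, and a symmetrization trick (with a ``ghost sample'') to compare uniform deviations to a Rademacher average.

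First, I would define the uniform deviation $\Phi(S):=\sup_{f\in\mathcal{F}}\bigl(er_D(f)-er_S(f)\bigr)$, which is an upper bound on the quantity of interest $er_D(f)-er_S(f)$ for every fixed $f\in\mathcal{F}$. Since the loss takes values in $[0,B]$, replacing any single sample $\vec{z}_i$ in $S$ changes $er_S(f)$ by at most $B/m$ uniformly in $f$, so the supremum $\Phi(S)$ also has bounded differences $B/m$. McDiarmid's inequality then yields, with probability at least $1-\delta/2$,
\[
\Phi(S)\le\mathbb{E}_{S}[\Phi(S)]+B\sqrt{\tfrac{\log(2/\delta)}{2m}}.
\]

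Next, I would bound $\mathbb{E}_{S}[\Phi(S)]$ by the expected Rademacher complexity of $l_{\mathcal{F}}$ via symmetrization. Introduce an independent ``ghost'' sample $S'=(\vec{z}'_1,\ldots,\vec{z}'_m)$ drawn from $D^m$, so that $er_D(f)=\mathbb{E}_{S'}[er_{S'}(f)]$. Pulling this expectation inside the supremum (Jensen) gives $\mathbb{E}_{S}[\Phi(S)]\le\mathbb{E}_{S,S'}\bigl[\sup_{f}\tfrac{1}{m}\sum_i\bigl(l_f(\vec{z}'_i)-l_f(\vec{z}_i)\bigr)\bigr]$. Inserting independent Rademacher signs $\epsilon_i$ (which leaves the joint distribution of $(\vec{z}_i,\vec{z}'_i)$ invariant since each pair is exchangeable), and then splitting the supremum via the triangle inequality, produces the desired bound $\mathbb{E}_{S}[\Phi(S)]\le 2\,R_D(l_{\mathcal{F}})$.

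Finally, I would relate the expected Rademacher complexity to its empirical counterpart. Because $\hat{R}_S(l_{\mathcal{F}})$ itself has bounded differences $B/m$ (by the same $[0,B]$ range of $l$), a second application of McDiarmid gives, with probability at least $1-\delta/2$, $R_D(l_{\mathcal{F}})\le\hat{R}_S(l_{\mathcal{F}})+B\sqrt{\log(2/\delta)/(2m)}$. A union bound combines the two high-probability events, and chaining the inequalities yields $er_D(f)-er_S(f)\le 2B\,\hat{R}_S(l_{\mathcal{F}})+3B\sqrt{\log(2/\delta)/(2m)}$, as claimed. The main obstacle, which is quite mild here, is bookkeeping the constants: one must be careful to use the range $[0,B]$ (not $[-B,B]$) in both McDiarmid applications, and to correctly track the factor of $2$ from symmetrization together with the scaling factor $B$ inherited from $\|l\|_\infty\le B$ so that the final constants $(2B,3B)$ come out as stated.
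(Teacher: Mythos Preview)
The paper does not give its own proof of this lemma: it is quoted as a known result from the cited reference and then used as a black box to obtain the subsequent proposition. Your outline is exactly the standard argument one finds in that reference and in textbooks (McDiarmid for the uniform deviation $\sup_f(er_D(f)-er_S(f))$, symmetrization with a ghost sample to bound the expectation by $2R_D(l_{\mathcal F})$, a second McDiarmid to pass from $R_D$ to $\hat R_S$, and a union bound over the two events).

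One small bookkeeping remark: if you actually chain your own inequalities, symmetrization contributes only a factor of $2$ (not $2B$), while the two McDiarmid applications contribute $B\sqrt{\log(2/\delta)/(2m)}$ and $2\cdot B\sqrt{\log(2/\delta)/(2m)}$ respectively, so the conclusion you derive is
\[
er_D(f)\le er_S(f)+2\,\hat R_S(l_{\mathcal F})+3B\sqrt{\tfrac{\log(2/\delta)}{2m}},
\]
with $2$ rather than $2B$ in front of $\hat R_S(l_{\mathcal F})$. The lemma as stated in the paper has the extra factor $B$; in the paper's application the loss $l=1-f_\Phi$ takes values in $[0,1]$, so $B=1$ and the discrepancy is immaterial there.
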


Using this result and Theorem \ref{thm:main2}, we get the following upper bound on the generalization error for the function class $\mathcal{F}(\mathcal{O}^{(k)}_{\Psi})$ in terms of the 
Rademacher complexity of $\mathcal{F}(\mathcal{O})$ and $\gamma^*$.
\begin{prop}
Consider a set of quantum circuits $\mathcal{O}$ and let $\Psi\notin \mathcal{O}$. For any $\delta>0$, the following statement holds
for all  $\Phi\in \mathcal{O}^{(k)}_{\Psi}$ with probability at least $1-\delta$
\begin{eqnarray}
\nonumber er_D(\Phi)\leq er_S(\Phi)
+2\gamma^* \hat{R}_S(\mathcal{F}(\mathcal{O}))
+3\sqrt{\frac{\log(2/\delta)}{2m}},
\end{eqnarray}
where $\gamma^*= \min\set{(1+2\gamma(\Psi))^k , 1+2\gamma_{\max,n}}$.
\end{prop}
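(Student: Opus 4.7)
The plan is to obtain the generalization bound by combining three ingredients in sequence: (i) the bound on the Rademacher complexity of the circuit-based function class established in Theorem~\ref{thm:main2}, (ii) the passage from the Rademacher complexity of the function class $\mathcal{F}(\mathcal{O}^{(k)}_{\Psi})$ to that of the associated loss class $l_{\mathcal{F}(\mathcal{O}^{(k)}_{\Psi})}$, and (iii) the Bartlett--Mendelson-style generalization bound stated as the Lemma cited from \cite{Bartlett03}.

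First, I would observe that the loss $l(\vec{z},\Phi) = 1 - f_{\Phi}(\vec{z})$ is valued in $[0,1]$, since $f_{\Phi}(\vec{z}) = \trace[\Phi(\proj{x})\proj{y}]$ is a Born-rule probability. Hence in the Bartlett--Mendelson lemma one may take $B=1$, so that the multiplicative constants in the bound become the explicit $2$ and $3$ appearing in the proposition.

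Second, I would reduce the Rademacher complexity $\hat{R}_S(l_{\mathcal{F}(\mathcal{O}^{(k)}_{\Psi})})$ of the loss class to that of the underlying function class $\mathcal{F}(\mathcal{O}^{(k)}_{\Psi})$. Writing $l_{\Phi}(\vec{z}) = 1 - f_{\Phi}(\vec{z})$ and using that the Rademacher variables $\epsilon_i$ are mean zero and symmetric about the origin, we get
\begin{align*}
\hat{R}_S(l_{\mathcal{F}})
&= \mathbb{E}_{\epsilon}\!\left[\sup_{\Phi}\frac{1}{m}\sum_i \epsilon_i\bigl(1-f_{\Phi}(\vec{z}_i)\bigr)\right] \\
&= \mathbb{E}_{\epsilon}\!\left[\frac{1}{m}\sum_i \epsilon_i\right] + \mathbb{E}_{\epsilon}\!\left[\sup_{\Phi}\frac{1}{m}\sum_i (-\epsilon_i) f_{\Phi}(\vec{z}_i)\right] \\
&= \hat{R}_S(\mathcal{F}(\mathcal{O}^{(k)}_{\Psi})),
\end{align*}
where the first term vanishes and in the second I used the symmetry $\epsilon_i \stackrel{d}{=} -\epsilon_i$.

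Third, I would apply Theorem~\ref{thm:main2} to conclude $\hat{R}_S(\mathcal{F}(\mathcal{O}^{(k)}_{\Psi})) \leq \gamma^{*}\,\hat{R}_S(\mathcal{F}(\mathcal{O}))$, and then substitute this into the Bartlett--Mendelson bound applied with $B=1$ to the class $\mathcal{F}(\mathcal{O}^{(k)}_{\Psi})$. This immediately yields the advertised inequality for every $\Phi \in \mathcal{O}^{(k)}_{\Psi}$ with probability at least $1-\delta$. There is no genuine obstacle here: the argument is a clean composition of the previously established theorem with a standard generalization bound, the only mild bookkeeping being the affine reduction from the loss class to the function class, which is transparent due to the symmetry of the Rademacher variables.
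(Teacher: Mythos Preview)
Your argument is correct and matches the paper's own treatment, which simply states that the proposition follows by combining the Bartlett--Mendelson lemma with Theorem~\ref{thm:main2}, without spelling out any details. Your explicit reduction $\hat{R}_S(l_{\mathcal{F}}) = \hat{R}_S(\mathcal{F})$ via the affine form of the loss and the symmetry of the Rademacher variables is exactly the bookkeeping the paper leaves implicit, and it is the right way to bridge the gap.
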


\section{Conclusion}
In this paper, we investigated the effects of quantum resources on the statistical complexity of quantum circuits. We considered the Rademacher and Gaussian complexities of the quantum-circuit-based learning model in 
two cases: (1) quantum circuits with access to only a restricted set of channels $\mathcal O$, and (2) quantum circuits with access to a restricted set of channels $\mathcal O$ together with an additional resource channel $\Psi \in \mathcal O$.
 We show that by adding a resource channel to a set of free channels, the Radamacher and Gaussian complexities are increased by an amount that is bounded by the free robustness of the resource channel multiplied by the number of times the channel is used. 
 We applied our results to two special cases: (1) stabilizer circuits that are supplemented with a limited number of T gates and (2) instantaneous quantum polynomial-time Clifford circuits that are supplemented with a limited number of CCZ gates.
 Using this result, we derive an upper bound on the generalization error associated with learning from the training data arising from such circuits.

Our results reveal  a new connection between quantum resources and the statistical complexity of quantum circuits, 
which paves the way for further research into the statistical complexity of learning models based on quantum circuits, like the variational quantum eigensolver and the quantum neural network. Furthermore, from a quantum resource theoretic point of view, 
our results also provide a new operational interpretation of free robustness in general resource theories.

While we focused on the quantum circuit model in this paper, it will be interesting to generalize
our results to other computational models such as measurement-based quantum computation (MBQC), tensor networks, etc.  
Besides the Rademacher and Gaussian complexities, there  are also other measures of statistical complexity of function classes, such as the metric entropy, the VC dimension (or more generally, the pseudo-dimension \cite{caro2020pseudo}), and the topological entropy \cite{bu2020depth}. It will be interesting to see 
the effects of quantum resources using these other measures of statistical complexity. We leave this problem for further research.

\begin{acknowledgments}
K. B. thanks Arthur Jaffe  and Zhengwei Liu for the help and support during the breakout of the COVID-19 pandemic. 
K. B. acknowledges the support of
 ARO Grants W911NF-19-1-0302 and
W911NF-20-1-0082, and the support from Yau Mathematical
Science Center at Tsinghua University during the visit. 

\end{acknowledgments}

 \bibliography{SatCom-lit}

\appendix
\widetext

\section{Basic properties of Rademacher complexity}
In this section, we list several basic properties of the Rademacher complexity,
which may be found in  \cite{shalev2014}.

Given a subset  $A$ of $\real^m$, the Rademacher complexity of 
$A$ is defined as
\begin{eqnarray}
\hat{R}(A)=\mathbb{E}_{\vec{\epsilon}}
\sup_{\vec{v}\in A}
\frac{1}{m}\sum_{i}\epsilon_iv_i,
\end{eqnarray}
where $\set{\epsilon_i}_i$ are i.i.d Rademacher random variables.

\begin{prop}[\cite{Bartlett03,shalev2014}]
The Rademacher complexity satisfies the following properties:

(1)
\begin{eqnarray}
\hat{R}(A)=\hat{R}(\mathrm{Conv}(A)),
\end{eqnarray}
where
$\mathrm{Conv}(A)=\left\{\sum_i\lambda_i\vec{v}_i: \vec{v}_i\in A, \lambda_i\geq 0, \sum_i\lambda_i=1\right\}$.

(2) For any $c\in \real$, we have 
\begin{eqnarray}
\hat{R}(cA)=|c|\hat{R}(A),
\end{eqnarray}
where $cA:=\set{c\vec{v}:\vec{v}\in A}$.

(3) For any $\vec{c}\in \real^m$, we have
\begin{eqnarray}
\hat{R}(A+\vec{c})=\hat{R}(A),
\end{eqnarray}
where $A+\vec c:=\set{\vec{v}+\vec{c}:\vec{v}\in A}$.

(4) For any $A_1,A_2\subset \real^m$, we have
\begin{eqnarray}
\hat{R}(A_1+A_2)
=\hat{R}(A_1)+\hat{R}(A_2),
\end{eqnarray}
where $A_1+A_2:=\set{\vec{v}_1+\vec{v}_2: \vec{v}_1\in A_1,\vec{v}_2\in A_2}$.

(5) Given a Lipschitz function $\phi:\real\to \real$ with Lipschitz constant $L$ and
$\phi(0)=0$, we have 
\begin{eqnarray}
\hat{R}(\phi\circ A)
\leq L\hat{R}(A),
\end{eqnarray}
where $\phi\circ A:=\set{(\phi(x_1), \phi(x_2),...,\phi(x_m)): (x_1, x_2, ...,x_m)\in A}$. 

\end{prop}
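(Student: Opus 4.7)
The plan is to establish (1)–(4) by combining elementary properties of the supremum with two features of the Rademacher distribution — zero mean ($\mathbb{E}\epsilon_i=0$) and sign symmetry (the vector $(-\epsilon_1,\ldots,-\epsilon_m)$ has the same joint distribution as $(\epsilon_1,\ldots,\epsilon_m)$) — while (5) requires a more delicate coordinate-by-coordinate symmetrization, which is the main obstacle.

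For (1), I would fix any sign pattern $\vec\epsilon$ and observe that the linear functional $\vec v\mapsto \sum_i \epsilon_i v_i$ satisfies $\sum_j \lambda_j \sum_i \epsilon_i v_{j,i} \le \sup_{\vec v\in A}\sum_i\epsilon_i v_i$ for every convex combination of elements of $A$; combined with $A\subset\mathrm{Conv}(A)$, the two suprema agree before averaging, so their expectations agree. For (2), if $c\ge 0$ the constant pulls out of the supremum directly; if $c<0$, I absorb the sign into the Rademachers using the sign symmetry noted above, which converts $c$ into $|c|$. For (3), the identity $\sum_i \epsilon_i(v_i+c_i)=\sum_i\epsilon_iv_i+\sum_i\epsilon_ic_i$ decouples the constant shift, and the extra term vanishes in expectation because $\mathbb{E}\epsilon_i=0$. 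For (4), the supremum over the Minkowski sum decouples pointwise, $\sup_{\vec v\in A_1+A_2}\sum_i\epsilon_iv_i = \sup_{\vec v_1\in A_1}\sum_i\epsilon_iv_{1,i}+\sup_{\vec v_2\in A_2}\sum_i\epsilon_iv_{2,i}$, after which linearity of expectation yields the claim.

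For (5), my plan is to peel off one coordinate at a time (the Ledoux--Talagrand contraction argument). Writing $u(\vec x):=\sum_{i\ge 2}\epsilon_i\phi(x_i)$ and conditioning on $\epsilon_2,\ldots,\epsilon_m$, the inner expectation over $\epsilon_1$ becomes
\begin{eqnarray*}
\tfrac{1}{2}\sup_{\vec x\in A}\bigl[\phi(x_1)+u(\vec x)\bigr]+\tfrac{1}{2}\sup_{\vec y\in A}\bigl[-\phi(y_1)+u(\vec y)\bigr]
=\tfrac{1}{2}\sup_{\vec x,\vec y\in A}\bigl[\phi(x_1)-\phi(y_1)+u(\vec x)+u(\vec y)\bigr].
\end{eqnarray*}
The Lipschitz bound gives $\phi(x_1)-\phi(y_1)\le L|x_1-y_1|$, and splitting into the cases $x_1\ge y_1$ and $x_1<y_1$ (swapping $\vec x$ and $\vec y$ in the second case, using the symmetry of $u(\vec x)+u(\vec y)$ in these two arguments) lets me replace $\phi(x_1)-\phi(y_1)$ by $L(x_1-y_1)$ inside the supremum. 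Reversing the symmetrization then yields $\mathbb{E}_{\epsilon_1}\sup_{\vec x\in A}[\epsilon_1\phi(x_1)+u(\vec x)]\le \mathbb{E}_{\epsilon_1}\sup_{\vec x\in A}[L\epsilon_1 x_1+u(\vec x)]$. Iterating the same argument for $\epsilon_2,\ldots,\epsilon_m$ strips off each remaining $\phi$ in turn and produces the bound $\hat R(\phi\circ A)\le L\hat R(A)$. The main difficulty is carrying out the case analysis inside the symmetrized supremum cleanly; the hypothesis $\phi(0)=0$ is not actually invoked by this argument but is a standard normalization in related contraction results.
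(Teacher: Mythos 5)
Your proposal is correct: parts (1)--(4) follow exactly as you argue from linearity of the functional $\vec v\mapsto\sum_i\epsilon_i v_i$, sign symmetry, and zero mean of the Rademacher variables, and your coordinate-by-coordinate symmetrization for (5) is a sound rendition of the Ledoux--Talagrand contraction argument, including the correct observation that $\phi(0)=0$ is not actually needed for the version of the complexity without absolute values. The paper itself offers no proof of this proposition---it simply cites \cite{Bartlett03,shalev2014}---and your argument is essentially the standard one found in those references, so you have supplied in full what the paper delegates to the literature.
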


When the set $A$ is finite, Massart's lemma gives an upper bound for the Rademacher complexity of $A$. 

\begin{lem}[Massart's lemma \cite{shalev2014}]\label{lem:massart}
Given a finite set $A\subset \real^m$, then we have 
\begin{eqnarray}
\hat{R}(A)\leq \max_{\vec{v}\in A}\norm{\vec{v}}_2\frac{\sqrt{2\log |A|}}{m},
\end{eqnarray}
where  $|A|$ denotes the size of the finite set $A$. 
\end{lem}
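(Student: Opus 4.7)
The plan is the classical exponential-moment (Chernoff-style) argument, followed by optimization over a single free parameter $\lambda > 0$. Write $m\hat{R}(A) = \mathbb{E}_{\vec\epsilon}[\sup_{\vec v \in A}\sum_{i=1}^m \epsilon_i v_i]$ and set $r := \max_{\vec v \in A}\norm{\vec v}_2$; the goal reduces to showing $\mathbb{E}_{\vec\epsilon}[\sup_{\vec v \in A}\sum_i \epsilon_i v_i] \leq r\sqrt{2\log|A|}$, from which division by $m$ gives the stated bound.

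First I would introduce the parameter $\lambda > 0$ inside an exponential via Jensen's inequality applied to the convex function $x \mapsto e^{\lambda x}$, obtaining $\exp(\lambda\, \mathbb{E}_{\vec\epsilon}[\sup_{\vec v}\sum_i \epsilon_i v_i]) \leq \mathbb{E}_{\vec\epsilon}[\exp(\lambda \sup_{\vec v}\sum_i \epsilon_i v_i)]$. Since the exponentials are nonnegative, the supremum inside is controlled by the full sum over $\vec v \in A$, and by independence of the $\epsilon_i$ each summand factorizes as $\mathbb{E}_{\vec\epsilon}[\exp(\lambda \sum_i \epsilon_i v_i)] = \prod_i \cosh(\lambda v_i)$.

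Next I would invoke the elementary Rademacher sub-Gaussian bound $\cosh(t) \leq e^{t^2/2}$ (easily verified by comparing Taylor series term-by-term) to get $\prod_i \cosh(\lambda v_i) \leq \exp(\lambda^2 \norm{\vec v}_2^2/2) \leq \exp(\lambda^2 r^2/2)$. Summing over the $|A|$ vectors in $A$ and then taking logarithms yields the master inequality $\lambda\, \mathbb{E}_{\vec\epsilon}[\sup_{\vec v}\sum_i \epsilon_i v_i] \leq \log|A| + \lambda^2 r^2/2$, equivalently $\mathbb{E}_{\vec\epsilon}[\sup_{\vec v}\sum_i \epsilon_i v_i] \leq \log|A|/\lambda + \lambda r^2/2$.

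The final step is to optimize this bound over $\lambda > 0$. A single-variable calculus exercise shows that the right-hand side is minimized at $\lambda = \sqrt{2\log|A|}/r$ (the degenerate case $r = 0$ forces $A \subseteq \{\vec 0\}$ and trivializes the claim), producing the desired estimate $\mathbb{E}_{\vec\epsilon}[\sup_{\vec v}\sum_i \epsilon_i v_i] \leq r\sqrt{2\log|A|}$. There is no substantive obstacle here — the entire argument is a short, standard calculation. The only points requiring care are the nonnegativity justifying the $\sup$-to-$\sum$ step and the correct choice of $\lambda$ at the optimization step.
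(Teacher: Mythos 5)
Your proof is correct, and it coincides with the standard argument: the paper itself states Massart's lemma without proof, citing \cite{shalev2014}, and the exponential-moment route you take---Jensen's inequality, bounding the supremum by the sum over the finite set $A$, factorizing by independence into $\prod_i \cosh(\lambda v_i)$, applying $\cosh(t)\leq e^{t^2/2}$, and optimizing at $\lambda=\sqrt{2\log|A|}/r$---is precisely the proof given in that reference. The only edge case beyond $r=0$ worth a remark is $|A|=1$, where $\log|A|=0$ makes the optimal $\lambda$ degenerate; there the claim is immediate anyway, since $\hat{R}(A)=0$ for a singleton, or one can take $\lambda\to 0^{+}$ in your master inequality.
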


We now state an important result of the Rademacher complexity, which allows it to be estimated from a single sample set $S=(z_1,\ldots,z_m)$. 
\begin{lem}[\cite{Bartlett03}]
Let $\mathcal G$ be a set of functions $\mathcal X \rightarrow [a,b]$, where $a< b$. Let $m \in \mathbb Z^+$ be a positive integer and $D$ be a probability distribution. Let $t>0$. Then,
\begin{align}
& \mathop{\mathrm{Pr}}_{
\substack{(z_1,\ldots,z_m)\sim D^m \\
\epsilon_1,\ldots,\epsilon_m \sim \mathrm{Rad}
}
}
\left[\left|R_D(\mathcal{G})-\frac{1}{m}\sup_{g\in\mathcal{G}}\sum_{i=1}^m \epsilon_i
g(z_i)\right|\geq t
\right]
\nonumber\\
&\hspace{10ex}
\leq
2\exp\left[- \frac{2m t^2}{(b-a)^2+4\max\set{a^2,b^2}}\right],
\end{align}
and
\begin{align}
\mathop{\mathrm{Pr}}_{S\sim D^m}
\left[
\left|R_D(\mathcal{G})-\hat{R}_S(\mathcal{G})
\right|\geq t\right]
\leq
2\e^{-2m t^2 /(b-a)^2}.
\end{align}
\end{lem}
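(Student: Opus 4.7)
The plan is to prove both statements by applying McDiarmid's bounded-difference inequality to functions of independent random variables. In both cases the expectation of the random variable in question equals $R_D(\mathcal{G})$: for the first bound,
\[
\mathbb{E}_{S,\vec\epsilon}\!\left[\frac{1}{m}\sup_{g\in\mathcal{G}}\sum_{i=1}^m \epsilon_i g(z_i)\right] = \mathbb{E}_S[\hat{R}_S(\mathcal{G})] = R_D(\mathcal{G}),
\]
and for the second it is the definition of $R_D(\mathcal{G})$. So in each case it suffices to control the deviation of the random variable from its mean, and the proof reduces to computing the right bounded-difference constants.

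For the second inequality I would view $\hat{R}_S(\mathcal{G})$ as a function of the $m$ i.i.d.\ coordinates $z_1,\ldots,z_m$. If $S$ and $S'$ differ only in the $i$-th coordinate, the standard sup-difference argument gives, for each fixed $\vec\epsilon$,
\[
\left|\sup_g \sum_j \epsilon_j g(z_j) - \sup_g \sum_j \epsilon_j g(z_j')\right| \leq |g^*(z_i) - g^*(z_i')| \leq b-a,
\]
where $g^*$ attains the larger of the two suprema. Taking expectation over $\vec\epsilon$ and dividing by $m$ yields a bounded difference $c_i = (b-a)/m$. McDiarmid then gives $\mathrm{Pr}[|\hat{R}_S(\mathcal{G}) - R_D(\mathcal{G})| \geq t] \leq 2\exp(-2t^2/\sum_i c_i^2) = 2\e^{-2mt^2/(b-a)^2}$.

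For the first inequality I would regard $\frac{1}{m}\sup_g \sum_i \epsilon_i g(z_i)$ as a function of the $2m$ mutually independent variables $z_1,\ldots,z_m,\epsilon_1,\ldots,\epsilon_m$, and compute the bounded differences coordinate by coordinate. Perturbing a single $z_i$ changes the value by at most $(b-a)/m$, by the same sup-difference trick as above. Perturbing a single $\epsilon_i$ from $+1$ to $-1$ changes the inner sum by $-2g(z_i)$, whose magnitude is at most $2\max\{|a|,|b|\}$, giving bounded difference $2\max\{|a|,|b|\}/m$. Plugging into the non-uniform version of McDiarmid,
\[
\sum_i c_i^2 = m\cdot\frac{(b-a)^2}{m^2} + m\cdot\frac{4\max\{a^2,b^2\}}{m^2} = \frac{(b-a)^2 + 4\max\{a^2,b^2\}}{m},
\]
which recovers the advertised exponent $-2mt^2/[(b-a)^2 + 4\max\{a^2,b^2\}]$.

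The main obstacle I anticipate is careful bookkeeping of the sup-difference arguments, particularly in the first inequality where both $z_i$ and $\epsilon_i$ are being perturbed, since the bound on the difference of two suprema must be obtained by choosing $g^*$ from the larger side and then comparing its two values. The factor of $2$ in the $\epsilon$-coordinate bound (leading to $4\max\{a^2,b^2\}$ rather than $\max\{a^2,b^2\}$) arises from the full sign flip $\epsilon_i \mapsto -\epsilon_i$ and is essential for matching the stated constant. Once the correct bounded differences are in hand, the rest is a direct invocation of McDiarmid.
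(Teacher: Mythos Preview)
The paper does not supply its own proof of this lemma; it is quoted from \cite{Bartlett03} as a known concentration result, so there is nothing to compare your argument against. That said, your approach via McDiarmid's bounded-difference inequality is exactly the standard one and is correct: the expectation identifications are right, the sup-difference bounds $(b-a)/m$ for a perturbed sample coordinate and $2\max\{|a|,|b|\}/m$ for a flipped Rademacher sign are obtained correctly (by fixing a near-maximizer $g^*$ on the larger side), and the resulting $\sum_i c_i^2$ reproduces the stated exponents in both inequalities. One minor point of care you already flagged: the supremum may not be attained, so in the sup-difference step you should take a $g^*$ that is $\eta$-close to the sup and let $\eta\to 0$; this is routine and does not affect the constants.
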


\section{Proof of Theorem 1}\label{proof:main1}
\begin{proof}

First, let us rewrite the Rademacher complexity as follows
\begin{eqnarray}
\hat{R}_S(\mathcal{F})=\mathbb{E}
\sup_{f\in \mathcal{F}}
\frac{1}{m}\sum^m_{i=1}\epsilon_if(\vec{z}_i)
=\mathbb{E}_{\vec{\epsilon}}
\sup_{f\in\mathcal{F}}
\inner{\vec{\epsilon}}{\vec{f}},
\end{eqnarray}
where $\vec{\epsilon}=(\epsilon_1,...,\epsilon_m)\in\set{\pm}^n$,
$\vec{f}=(f(\vec{z}_1),...,f(\vec{z}_m))$ and 
$
\inner{\vec{\epsilon}}{\vec{f}}
=\frac{1}{m}\sum^m_{i=1}
\epsilon_if(\vec{z}_i)
$.

The  inequality $\hat{R}_S(\mathcal{F}(\mathcal{O}))
\leq
\hat{R}_S(\mathcal{F}(\mathcal{O}_{\Psi}))$ comes directly from the definition of Rademacher complexity and the fact that 
$\mathcal{O}\subset \mathcal{O}_{\Psi}$.
Hence,  we only need to prove that
\begin{eqnarray} 
\hat{R}_S(\mathcal{F}(\mathcal{O}_{\Psi}))
\leq (1+\gamma(\Psi)) \hat{R}_S(\mathcal{F}(\mathcal{O})).
\end{eqnarray}

Let us define the set $A$ as follows
\begin{eqnarray}\label{def:A}
A=\left\{\vec{\epsilon}\in\set{\pm1}^m \Bigg|
\inner{\vec{\epsilon}}{\vec{f}_{\Psi}}>\sup_{ f\in\mathcal{F}(\mathcal{O})} \inner{\vec{\epsilon}}{\vec{f}} 
\right\}.
\end{eqnarray}

To finish the proof, we need the following two  lemmas about the basic properties of the set 
$A$ defined in \eqref{def:A}.

\begin{lem}\label{lem:eq}
Given the set A defined in \eqref{def:A}, we have
\begin{eqnarray}
A\cap (-A)=\emptyset,
\end{eqnarray}
where $-A:=\set{-\vec{\epsilon}| \vec{\epsilon}\in A}$.
\end{lem}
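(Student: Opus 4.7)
The plan is a short proof by contradiction. Suppose for contradiction that some $\vec\epsilon$ lies in $A\cap(-A)$. Then the defining inequality for $A$ applies to both $\vec\epsilon$ and $-\vec\epsilon$, and I want to show these two conditions cannot simultaneously hold.

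First, from $\vec\epsilon\in A$, I get
\[
\inner{\vec\epsilon}{\vec f_\Psi}>\sup_{f\in\mathcal F(\mathcal O)}\inner{\vec\epsilon}{\vec f}.
\]
Second, from $-\vec\epsilon\in A$ I have $\inner{-\vec\epsilon}{\vec f_\Psi}>\sup_{f\in\mathcal F(\mathcal O)}\inner{-\vec\epsilon}{\vec f}$. I would then use linearity of the inner product in the first slot, together with the identity $\sup_{f}(-\inner{\vec\epsilon}{\vec f})=-\inf_f\inner{\vec\epsilon}{\vec f}$, to rewrite the second condition as
\[
\inner{\vec\epsilon}{\vec f_\Psi}<\inf_{f\in\mathcal F(\mathcal O)}\inner{\vec\epsilon}{\vec f}.
\]

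Combining the two yields
\[
\sup_{f\in\mathcal F(\mathcal O)}\inner{\vec\epsilon}{\vec f}<\inner{\vec\epsilon}{\vec f_\Psi}<\inf_{f\in\mathcal F(\mathcal O)}\inner{\vec\epsilon}{\vec f},
\]
which contradicts the trivial bound $\inf\le\sup$ (valid because $\mathcal F(\mathcal O)$ is nonempty, e.g. it contains the identity channel). This forces $A\cap(-A)=\emptyset$. The only subtlety I expect is the bookkeeping in passing from a supremum over $-\vec\epsilon$ to an infimum over $\vec\epsilon$; everything else is immediate from the definitions.
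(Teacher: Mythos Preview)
Your proof is correct and essentially identical to the paper's: both hinge on the observation that negating $\vec\epsilon$ flips the defining strict inequality via $\sup_f\inner{-\vec\epsilon}{\vec f}=-\inf_f\inner{\vec\epsilon}{\vec f}$, which then collides with $\inf\le\sup$. The only cosmetic difference is that the paper phrases it as a direct implication ($\vec\epsilon\in A\Rightarrow -\vec\epsilon\notin A$) rather than a contradiction.
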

\begin{proof}
Based on the definition of the set $A$, we have 
\begin{eqnarray*}
\inner{\vec{\epsilon}}{\vec{f}_{\Psi}}>\sup_{ f\in\mathcal{F}(\mathcal{O})} \inner{\vec{\epsilon}}{\vec{f}},
\end{eqnarray*}
for any $\vec{\epsilon}\in A$. Thus, we have
\begin{eqnarray*}
\inner{-\vec{\epsilon}}{\vec{f}_{\Psi}}
<-\sup_{ f\in\mathcal{F}(\mathcal{O})} \inner{\vec{\epsilon}}{\vec{f}} 
=\inf_{ f\in\mathcal{F}(\mathcal{O})} \inner{-\vec{\epsilon}}{\vec{f}} 
\leq \sup_{ f\in\mathcal{F}(\mathcal{O})} \inner{-\vec{\epsilon}}{\vec{f}},
\end{eqnarray*}
That is, $-\vec{\epsilon}\in A^c$. Therefore, we have
$A\cap A^c=\emptyset$.
\end{proof}

\begin{lem}\label{lem:ineq}
Given the set A defined in Eq.~\eqref{def:A}, 
we have 
\begin{eqnarray}
\sum_{\vec{\epsilon}\in A}
\sup_{f\in \mathcal{F}(\mathcal{O})}
\inner{\vec{\epsilon}}{\vec{f}}
+\sum_{\vec{\epsilon}\in -A}
\sup_{f\in \mathcal{F}(\mathcal{O})}
\inner{\vec{\epsilon}}{\vec{f}}
\leq\sum_{\vec{\epsilon}\in \set{\pm1}^m}
\sup_{f\in \mathcal{F}(\mathcal{O})}
\inner{\vec{\epsilon}}{\vec{f}}.
\end{eqnarray}
\end{lem}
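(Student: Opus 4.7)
The plan is to decompose the index set $\{\pm 1\}^m$ into three disjoint pieces and show that the piece not appearing on the left-hand side contributes a non-negative quantity to the right-hand side.

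First, I would define $B := \{\pm 1\}^m \setminus (A \cup (-A))$. By Lemma \ref{lem:eq}, we have $A \cap (-A) = \emptyset$, so $A$, $-A$, and $B$ form a disjoint partition of $\{\pm 1\}^m$. Hence
\begin{eqnarray*}
\sum_{\vec\epsilon \in \{\pm 1\}^m} \sup_{f \in \mathcal F(\mathcal O)} \inner{\vec\epsilon}{\vec f}
= \sum_{\vec\epsilon \in A} \sup_{f} \inner{\vec\epsilon}{\vec f}
+ \sum_{\vec\epsilon \in -A} \sup_{f} \inner{\vec\epsilon}{\vec f}
+ \sum_{\vec\epsilon \in B} \sup_{f} \inner{\vec\epsilon}{\vec f},
\end{eqnarray*}
so it suffices to show that the last sum is non-negative.

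The key observation is that $B$ is closed under negation: $\vec\epsilon \in B$ iff both $\vec\epsilon \notin A$ and $\vec\epsilon \notin -A$, which is equivalent to $-\vec\epsilon \notin -A$ and $-\vec\epsilon \notin A$, i.e.\ $-\vec\epsilon \in B$. Thus $B = -B$, and I can pair up each $\vec\epsilon \in B$ with its negation:
\begin{eqnarray*}
\sum_{\vec\epsilon \in B} \sup_{f} \inner{\vec\epsilon}{\vec f}
= \frac{1}{2} \sum_{\vec\epsilon \in B} \Bigl( \sup_{f_1} \inner{\vec\epsilon}{\vec f_1} + \sup_{f_2} \inner{-\vec\epsilon}{\vec f_2} \Bigr).
\end{eqnarray*}
For any fixed $\vec\epsilon$, I would bound this pair from below by choosing the same $f$ in both suprema: since $\inner{\vec\epsilon}{\vec f} + \inner{-\vec\epsilon}{\vec f} = 0$ by bilinearity, taking $f_1 = f_2 = f$ (for any $f \in \mathcal F(\mathcal O)$) gives
\begin{eqnarray*}
\sup_{f_1} \inner{\vec\epsilon}{\vec f_1} + \sup_{f_2} \inner{-\vec\epsilon}{\vec f_2}
\geq \inner{\vec\epsilon}{\vec f} + \inner{-\vec\epsilon}{\vec f} = 0.
\end{eqnarray*}
Therefore $\sum_{\vec\epsilon \in B} \sup_f \inner{\vec\epsilon}{\vec f} \geq 0$, and the claimed inequality follows by rearranging the disjoint decomposition above.

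I do not expect any serious obstacle here; the lemma is essentially a combinatorial consequence of Lemma \ref{lem:eq}. The only subtle point is recognizing that $B$ is symmetric under negation (so that the pairing argument is available) and that a single common choice of $f$ is enough to lower-bound the sum of two suprema by zero. No assumption about non-negativity of individual suprema is needed.
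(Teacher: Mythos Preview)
Your proof is correct and follows essentially the same route as the paper: both identify the complement $\{\pm 1\}^m \setminus (A \cup (-A))$, observe that it is closed under negation, and then lower-bound the corresponding sum of suprema by zero via a single common choice of $f$. The only cosmetic difference is that the paper names $B := A \cup (-A)$ and works with $B^c$, using the symmetry to write $\sum_{\vec\epsilon \in B^c}\vec\epsilon = 0$, whereas you name the complement itself $B$ and make the $\vec\epsilon \leftrightarrow -\vec\epsilon$ pairing explicit; the underlying argument is identical.
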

\begin{proof}
First, due to Lemma \ref{lem:eq}, we have 
\begin{eqnarray*}
\sum_{\vec{\epsilon}\in A}
\sup_{f\in \mathcal{F}(\mathcal{O})}
\inner{\vec{\epsilon}}{\vec{f}}
+\sum_{\vec{\epsilon}\in -A}
\sup_{f\in \mathcal{F}(\mathcal{O})}
=\sum_{\vec{\epsilon}\in A\cup (-A)}
\sup_{f\in \mathcal{F}(\mathcal{O})}
\inner{\vec{\epsilon}}{\vec{f}}.
\end{eqnarray*}
Hence, we only need to prove that
\begin{eqnarray*}
\sum_{\vec{\epsilon}\in A\cup (-A)}
\sup_{f\in \mathcal{F}(\mathcal{O})}
\inner{\vec{\epsilon}}{\vec{f}}
\leq\sum_{\vec{\epsilon}\in \set{\pm1}^m}
\sup_{f\in \mathcal{F}(\mathcal{O})}
\inner{\vec{\epsilon}}{\vec{f}}.
\end{eqnarray*}
Let us define the set $B:=A\cup (-A)$, 
then it easy to verify that
$-\vec{\epsilon}\in B^c$, for any $\vec{\epsilon}\in B^c$.
Then, we have
\begin{eqnarray*}
\sum_{\vec{\epsilon}\in \set{\pm1}^m}
\sup_{f\in \mathcal{F}(\mathcal{O})}
\inner{\vec{\epsilon}}{\vec{f}}-\sum_{\vec{\epsilon}\in A\cup (-A)}
\sup_{f\in \mathcal{F}(\mathcal{O})}
\inner{\vec{\epsilon}}{\vec{f}}
=\sum_{\vec{\epsilon}\in B^c}
\sup_{f\in \mathcal{F}(\mathcal{O})}
\inner{\vec{\epsilon}}{\vec{f}}
\geq \sum_{\vec{\epsilon}\in B^c}
\inner{\vec{\epsilon}}{\vec{f}}
=
\inner{ \sum_{\vec{\epsilon}\in B^c}\vec{\epsilon}}{\vec{f}}
=0.
\end{eqnarray*}

\end{proof}

Based on the definition of free robustness, there exist channels $\Phi_1,\Phi_2\in \mathrm{Conv}(\mathcal{O})$ such that 
\begin{eqnarray*}
\Psi=\left(1+\gamma(\Psi)\right)\Phi_1-\gamma(\Psi)\Phi_2.
\end{eqnarray*} 
Due to the linearity of function $f_{\Phi}$ with respect to $\Phi$, we have 
\begin{eqnarray*}
f_{\Psi}
=(1+\gamma(\Psi))f_{\Phi_1}-\gamma(\Psi)f_{\Phi_2}.
\end{eqnarray*}

Therefore, 
\begin{eqnarray*}
\hat{R}_S(\mathcal{F}(\mathcal{O}_{\Psi}))
&=&\frac{1}{2^m}
\sum_{\vec{\epsilon}\in A}
\inner{\vec{\epsilon}}{\vec{f}_{\Psi}}
+\frac{1}{2^m}
\sum_{\vec{\epsilon}\in A^c}
\sup_{f\in \mathcal{F}(\mathcal{O})}
\inner{\vec{\epsilon}}{\vec{f}}\\
&=&\frac{1}{2^m}
\sum_{\vec{\epsilon}\in A}
[(1+\gamma(\Psi))\inner{\vec{\epsilon}}{f_{\Phi_1}}-\gamma(\Psi)\inner{\vec{\epsilon}}{f_{\Phi_2}}]
+\frac{1}{2^m}
\sum_{\vec{\epsilon}\in A^c}\sup_{f\in \mathcal{F}(\mathcal{O})}
\inner{\vec{\epsilon}}{\vec{f}}\\
&=&
\frac{1}{2^m}
\sum_{\vec{\epsilon}\in A}
\inner{\vec{\epsilon}}{f_{\Phi_1}}
+\frac{1}{2^m}
\sum_{\vec{\epsilon}\in A^c}\sup_{f\in \mathcal{F}(\mathcal{O})}
\inner{\vec{\epsilon}}{\vec{f}}
+\gamma(\Psi)\frac{1}{2^m}
\sum_{\vec{\epsilon}\in A}
[\inner{\vec{\epsilon}}{f_{\Phi_1}}-\inner{\vec{\epsilon}}{f_{\Phi_2}}]\\
&\leq & 
\frac{1}{2^m}
\sum_{\vec{\epsilon}\in A}
\sup_{f\in \mathcal{F}(\mathcal{O})}
\inner{\vec{\epsilon}}{\vec{f}}
+\frac{1}{2^m}
\sum_{\vec{\epsilon}\in A^c}\sup_{f\in \mathcal{F}(\mathcal{O})}
\inner{\vec{\epsilon}}{\vec{f}}
+\gamma(\Psi)\frac{1}{2^m}
\sum_{\vec{\epsilon}\in A}
[\inner{\vec{\epsilon}}{f_{\Phi_1}}-\inner{\vec{\epsilon}}{f_{\Phi_2}}]\\
&=&
R_S(\mathcal{F}(\mathcal{O}))
+\gamma(\Psi)\frac{1}{2^m}
\sum_{\vec{\epsilon}\in A}
[\inner{\vec{\epsilon}}{f_{\Phi_1}}-\inner{\vec{\epsilon}}{f_{\Phi_2}}]\\
&= &
\hat{R}_S(\mathcal{F}(\mathcal{O}))
+\gamma(\Psi)\frac{1}{2^m}
\left[\sum_{\vec{\epsilon}\in A}
\inner{\vec{\epsilon}}{f_{\Phi_1}}+\sum_{\vec{\epsilon}\in -A}\inner{\vec{\epsilon}}{f_{\Phi_2}}\right]\\
&\leq&\hat{R}_S(\mathcal{F}(\mathcal{O}))
+\gamma(\Psi)\frac{1}{2^m}
\left[\sum_{\vec{\epsilon}\in A}
\sup_{f\in \mathcal{F}(\mathcal{O})}\inner{\vec{\epsilon}}{f}+\sum_{\vec{\epsilon}\in -A}\sup_{f\in \mathcal{F}(\mathcal{O})}\inner{\vec{\epsilon}}{f}\right]\\
&\leq&\hat{R}_S(\mathcal{F}(\mathcal{O}))
+\gamma(\Psi)\frac{1}{2^m}
\sum_{\vec{\epsilon}\in \set{\pm 1}^m}\sup_{f\in \mathcal{F}(\mathcal{O})}\inner{\vec{\epsilon}}{f}\\
&=&(1+\gamma(\Psi))\hat{R}_S(\mathcal{F}(\mathcal{O})),
\end{eqnarray*}
where the first and second inequality comes from the fact that $\Phi_1,\Phi_2\in \mathrm{Conv}(\mathcal{O})$, and the last inequality comes from Lemma \ref{lem:ineq}.

\end{proof}

\section{Proof of Theorem 2}\label{apen:prof_main2}

First,
let us prove the  following lemma about the relationship between the Rademacher complexities of 
$\mathcal{O}^{(k+1)}_{\Psi}$ and $\mathcal{O}^{(k)}_{\Psi}$.
\begin{lem}\label{lem:rel_k}
Given $m$ independent samples $S=(\vec{z}_1,\ldots ,\vec{z}_m)$ and a resource channel 
 $\Psi$, we have 
 \begin{eqnarray}
 \hat{R}_S\left(\mathcal{F}(\mathcal{O}^{(k+1)}_{\Psi})\right)
 \leq(1+2\gamma(\Psi))\hat{R}_S\left(\mathcal{F}(\mathcal{O}^{(k)}_{\Psi})\right),
 \end{eqnarray}
 for any $k\geq 0$.
\end{lem}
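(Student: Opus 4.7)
The plan is to apply the free-robustness decomposition argument behind Theorem \ref{thm:main1} one step up the hierarchy. By the definition of $\gamma(\Psi)$, there exist $\Phi_1,\Phi_2 \in \mathrm{Conv}(\mathcal{O})$ with $\Psi = (1+\gamma(\Psi))\Phi_1 - \gamma(\Psi)\Phi_2$. For any circuit $\Phi \in \mathcal{O}^{(k+1)}_{\Psi}$ that uses $\Psi$ exactly $k+1$ times, I would single out one occurrence of $\Psi$ in the defining product $\Phi_L \circ \cdots \circ \Phi_1$ and substitute this decomposition at that slot. Multilinearity of channel composition, together with linearity of $\Phi \mapsto f_\Phi$, yields
\begin{align}
f_\Phi = (1+\gamma(\Psi))\, f_{\tilde\Phi_1} - \gamma(\Psi)\, f_{\tilde\Phi_2},
\end{align}
where each $f_{\tilde\Phi_i}$ belongs to $\mathrm{Conv}(\mathcal{F}(\mathcal{O}^{(k)}_{\Psi}))$: expanding the convex combinations that define $\Phi_1$ and $\Phi_2$ writes $\tilde\Phi_i$ as a convex combination of concrete circuits, each of which retains the other $k$ copies of $\Psi$ and puts a free channel into the singled-out slot.

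Next, for any Rademacher sign vector $\vec\epsilon$ I would combine this decomposition with the fact that a linear functional attains the same supremum on a set and on its convex hull, together with the identity $-\inf_g \inner{\vec\epsilon}{\vec g} = \sup_g \inner{-\vec\epsilon}{\vec g}$, to derive the bound
\begin{align}
\inner{\vec\epsilon}{\vec f_\Phi} \leq (1+\gamma(\Psi)) \sup_{g \in \mathcal{F}(\mathcal{O}^{(k)}_{\Psi})} \inner{\vec\epsilon}{\vec g} + \gamma(\Psi) \sup_{g \in \mathcal{F}(\mathcal{O}^{(k)}_{\Psi})} \inner{-\vec\epsilon}{\vec g}.
\end{align}
The same inequality is trivially valid whenever $\Phi \in \mathcal{O}^{(k)}_{\Psi}$, since the first term alone already upper-bounds $\inner{\vec\epsilon}{\vec f_\Phi}$ and the second term is nonnegative (fixing any $g_0$ gives $\sup_g \inner{\vec\epsilon}{\vec g} + \sup_g \inner{-\vec\epsilon}{\vec g} \geq \inner{\vec\epsilon}{\vec g_0} + \inner{-\vec\epsilon}{\vec g_0} = 0$).

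Finally, taking the supremum over $\Phi \in \mathcal{O}^{(k+1)}_{\Psi}$ and then expectation over $\vec\epsilon$, the symmetry of the Rademacher distribution (so that $\vec\epsilon$ and $-\vec\epsilon$ are identically distributed) makes both expectation terms equal to $\hat{R}_S(\mathcal{F}(\mathcal{O}^{(k)}_{\Psi}))$, which combines to the claimed factor $1+2\gamma(\Psi)$. The main obstacle I expect is the combinatorial bookkeeping in the first step: verifying that after substituting the signed decomposition of $\Psi$ into one slot of a poly-length circuit, the resulting objects $\tilde\Phi_1$ and $\tilde\Phi_2$ genuinely lie in $\mathrm{Conv}(\mathcal{O}^{(k)}_{\Psi})$ rather than in some larger class. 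This amounts to a multilinearity argument on compositions, and it is the only place where the hierarchical structure of \eqref{eq:hierarchySets} actually enters the proof.
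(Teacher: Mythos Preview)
Your proposal is correct and follows essentially the same approach as the paper: both single out one copy of $\Psi$ in a circuit from $\mathcal{O}^{(k+1)}_\Psi$, substitute the free-robustness decomposition $\Psi = (1+\gamma(\Psi))\Phi_1 - \gamma(\Psi)\Phi_2$ at that slot, and use multilinearity of composition to land in $(1+\gamma(\Psi))\,\mathrm{Conv}(\mathcal{O}^{(k)}_\Psi) - \gamma(\Psi)\,\mathrm{Conv}(\mathcal{O}^{(k)}_\Psi)$. The only cosmetic difference is that the paper finishes by invoking the standard set-level identities $\hat{R}_S(A_1+A_2)=\hat{R}_S(A_1)+\hat{R}_S(A_2)$, $\hat{R}_S(cA)=|c|\,\hat{R}_S(A)$, and $\hat{R}_S(\mathrm{Conv}(A))=\hat{R}_S(A)$, whereas you unfold the same facts pointwise via the symmetry $\vec\epsilon \overset{d}{=} -\vec\epsilon$ of the Rademacher law.
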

\begin{proof}
By the definition 
of free robustness, there exist channels $\Phi_1,\Phi_2\in \mathrm{Conv}(\mathcal{O})$ such that 
\begin{eqnarray*}
\Psi=(1+\gamma(\Psi))\Phi_1-\gamma(\Psi)\Phi_2.
\end{eqnarray*} 
Therefore, for 
any channel $\Phi\in  \mathcal{O}^{(k+1)}_{\Psi}$, there 
exist two channels $\Phi', \Phi''\in \mathrm{Conv}(\mathcal{O}^{(k)}_{\Psi})$ such that 
\begin{eqnarray*}
\Psi=(1+\gamma(\Psi))\Phi'-\gamma(\Psi)\Phi''.
\end{eqnarray*}
Therefore, 
\begin{eqnarray*}
\mathcal{O}^{(k+1)}_{\Psi}
\subset 
(1+\gamma(\Psi))\mathrm{Conv}(\mathcal{O}^{(k)}_{\Psi})
-\gamma(\Psi)\mathrm{Conv}(\mathcal{O}^{(k)}_{\Psi}).
\end{eqnarray*}
Therefore, 
\begin{eqnarray*}
\hat{R}_S\left(\mathcal{F}(\mathcal{O}^{(k+1)}_{\Psi})\right)
&\leq& \hat{R}_S\left[\mathcal{F}\left((1+\gamma(\Psi))\mathrm{Conv}(\mathcal{O}^{(k)}_{\Psi})
-\gamma(\Psi)\mathrm{Conv}(\mathcal{O}^{(k)}_{\Psi})\right)\right]\\
&=&(1+\gamma(\Psi))\hat{R}_S\left(\mathcal{F}(\mathcal{O}^{(k)}_{\Psi})\right)+
\gamma(\Psi)\hat{R}_S\left(\mathcal{F}(\mathcal{O}^{(k)}_{\Psi})\right)\\
&=&(1+2\gamma(\Psi))\hat{R}_S\left(\mathcal{F}(\mathcal{O}^{(k)}_{\Psi})\right),
\end{eqnarray*}
where the first equality comes from that fact that 
$\hat{R}_S(\sum_i \mathcal{F}_i)=\sum_i\hat{R}_S(\mathcal{F}_i)$ where each $\mathcal{F}_i$ is a function class, and  the facts that 
Rademacher complexity is invariant under convex combination and $\hat{R}_S(c\mathcal{F})=|c|\hat{R}_s(\mathcal{F})$.
\end{proof}

We are now ready to prove the lemma.
\begin{proof}
Based on Lemma \ref{lem:rel_k}, we have the following inequality:
\begin{eqnarray*}
\hat{R}_S(\mathcal{F}\left(\mathcal{O}^{(k+1)}_{\Psi})\right)
\leq (1+2\gamma(\Psi))^k \hat{R}_S\left(\mathcal{F}(\mathcal{O})\right).
\end{eqnarray*}

Besides, for any $\Phi\in \mathcal{O}^{(k)}_{\Psi}$, there exist 
$\Phi_1,\Phi_2\in\mathcal{O}$ such that 
\begin{eqnarray}
\Phi=(1+\gamma)\Phi_1-\gamma\Phi_2,
\end{eqnarray}
where $\gamma\leq\gamma_{\max,n}$. Therefore, we have 
\begin{eqnarray*}
 \mathcal{O}^{(k)}_{\Psi}
 \subset (1+\gamma_{\max,n})\mathrm{Conv}(\mathcal{O})-\gamma_{\max,n}\mathrm{Conv}(\mathcal{O}),
\end{eqnarray*}
for any integer $k$.
Hence, we have
\begin{eqnarray*}
\hat{R}_S\left(\mathcal{F}(\mathcal{O}^{(k)}_{\Psi})\right)
&\leq& \hat{R}_S[\mathcal{F}\left((1+\gamma_{\max,n})\mathrm{Conv}(\mathcal{O})
-\gamma_{\max,n}\mathrm{Conv}(\mathcal{O})\right)]\\
&=&(1+\gamma_{\max,n})\hat{R}_S(\mathcal{F}(\mathcal{O}))+
\gamma_{\max,n}\hat{R}_S(\mathcal{F}(\mathcal{O}))\\
&=&(1+2\gamma_{\max,n})\hat{R}_S(\mathcal{F}(\mathcal{O})),
\end{eqnarray*}
for any integer $k$.
Therefore, we have 
\begin{eqnarray*}
\hat{R}_S\left(\mathcal{F}(\mathcal{O}^{(k)}_{\Psi})\right)
\leq \min\left\{1+2\gamma_{\max,n}, (1+2\gamma(\Psi))^k\right\}
\hat{R}_S(\mathcal{F}(\mathcal{O})).
\end{eqnarray*}

\end{proof}

\section{Proof of Example 1 and 2}\label{apen:prof_exam}

\subsection{Example 1: Proofs}
By Choi's representation of quantum channels \cite{watrous2018theory}, 
the function $f_{\Phi}$ can be written as follows
\begin{eqnarray}
f_{\Phi}(\vec{x},\vec{y})=2^n\Tr{\Phi\ot \mathbb{I}(\proj{\Lambda})\proj{\vec{x}}\ot \proj{\vec{y}}},
\end{eqnarray}
where $\ket{\Lambda}=1/\sqrt{2^n}\sum_{\vec{x}}\ket{\vec{x}}\ket{\vec{x}}$. 
Since $\Phi$ is a (unitary) stabilizer circuit and
$\ket{\Lambda}$ is a pure stabilizer state,
$\Phi\ot \mathbb{I}(\proj{\Lambda})$ is
a stabilizer state on $2n$ qubits.
Since the number
of pure stabilizer states on $2n$ qubits is $2^{(0.5+o(1))(2n)^2}$ \cite{Scott2004}, let us consider the vector $\vec{f}_{\Phi}=(f_{\Phi}(\vec{z}_i))^m_{i=1}$,
where
the set $\left\{\vec{f}_{\Phi}\right\}_{\Phi\in \mathcal{STAB}}$ is a finite set satisfying
\begin{eqnarray}
\left|\left\{\vec{f}_{\Phi}\right\}_{\Phi\in \mathcal{STAB}}\right|\leq 2^{(0.5+o(1))(2n)^2}. 
\end{eqnarray}

Hence, we have 
\begin{eqnarray*}
R_S(\mathcal{F}(\mathcal{STAB}))
&\leq& 4\frac{(1+o(1))n}{m}\max_{\Phi\in \mathcal{STAB}}\norm{\vec{f}_{\Phi}}_2\\
&\leq& 4\frac{(1+o(1))n}{\sqrt{m}}\max_{\Phi\in \mathcal{STAB}}\norm{\vec{f}_{\Phi}}_{\infty},
\end{eqnarray*}
where  the first inequality comes from Massart's Lemma (see Lemma \ref{lem:massart})
and the second inequality comes from the fact that $\norm{\cdot}_2\leq \sqrt{m}\norm{\cdot}_{\infty}$. 

Now, let us assume that we have 
access to the $T$ gate. In this case, let us define the corresponding sets of quantum channels $ \mathcal{STAB}_{T}$ and
$\mathcal{STAB}^{(k)}_{T}$. 
The free robustness of $T$ gate is
$
\gamma(T)\leq \sqrt{2}/2
$ (we leave open the question about whether this bound is tight) as the  $T$ gate written as a quantum channel $\Phi_T$ may be decomposed as follows
\begin{eqnarray}
\Phi_T(\cdot)=\left(\frac{1}{2}+\frac{\sqrt{2}}{2}\right)\Phi_S(\cdot)+\frac{1}{2}\Phi_Z(\cdot)-\frac{\sqrt{2}}{2}\Phi_{SZ}(\cdot),
\end{eqnarray}
where 
 $S=\mathrm{diag}[1,i]$ is the phase gate and 
$Z=\mathrm{diag}[1,-1]$ is the Pauli $Z$ gate .
By Theorem \ref{thm:main2}, we get the following
upper bound on the Rademacher complexity of 
$\mathcal{STAB}^{(k)}_{T}$:
\begin{eqnarray*}
\hat{R}_S(\mathcal{F}(\mathcal{STAB}_{T}))
\leq \left(1+\sqrt{2}/2\right)\hat{R}_S(\mathcal{F}(\mathcal{STAB}))
\leq O\left(\left(1+\sqrt{2}/2\right)\frac{n}{\sqrt{m}}\right)\max_{\Phi\in \mathcal{STAB}}\norm{\vec{f}_{\Phi}}_{\infty}.
\end{eqnarray*}
\begin{eqnarray*}
\hat{R}_S(\mathcal{F}(\mathcal{STAB}^{(k)}_{T}))
\leq \left(1+\sqrt{2}\right)^k\hat{R}_S(\mathcal{F}(\mathcal{STAB}))
\leq O\left(\left(1+\sqrt{2}\right)^k\frac{n}{\sqrt{m}}\right)\max_{\Phi\in \mathcal{STAB}}\norm{\vec{f}_{\Phi}}_{\infty}.
\end{eqnarray*}

\subsection{Example 2: Proofs}
Since $\mathcal{I}$ are the IQP circuits with only $Z$ and $CZ$ as internal gates, 
$\mathcal{I}$ is a finite set with size $|\mathcal{I}|=O(2^{n^2})$.
Hence
\begin{eqnarray*}
\hat{R}_S(\mathcal{F}(\mathcal{I}))\leq \frac{O(n)}{m}\max_{\Phi\in\mathcal{I}}\norm{\vec{f}_{\Phi}}_2
\leq \frac{O(n)}{\sqrt{m}}\max_{\Phi\in \mathcal{I}}\norm{\vec{f}_{\Phi}}_{\infty},
\end{eqnarray*}
where the first inequality comes from Massart's Lemma (see Lemma \ref{lem:massart})
and the second inequality comes from the fact that $\norm{\cdot}_2\leq \sqrt{m}\norm{\cdot}_{\infty}$. 

Now, let us consider IQP ciruits with access to the CCZ gate. 
Let us define $\mathcal{I}^{(k)}_{CCZ} $ to be the set of IQP circuits with at most 
$k$ CCZ gates. 
Then, the size of  $\mathcal{I}^{(k)}_{CCZ} $ is 
\begin{eqnarray}
\left|\mathcal{I}^{(k)}_{CCZ} \right|\leq |\mathcal{I}|\times\left(\sum^{k}_{j=0}\binom{n^3}{j}\right)\leq 2^{O(n^2)}n^{3k}.
\end{eqnarray}

Therefore, by Massart's Lemma,  we have
\begin{eqnarray}
\hat{R}_S\left(\mathcal{F}(\mathcal{I}^{(k)}_{CCZ})\right)
\leq\frac{O((n^2+k\log n)^{1/2})}{\sqrt{m}}\max_{\Phi\in \mathcal{I}^{(k)}_{CCZ}}\norm{\vec{f}_{\Phi}}_{\infty}.
\end{eqnarray}

\section{Results about the Gaussian complexity}
\label{sec:GaussianComplexity}
In the main text, we focused on the Rademacher complexity. In this appendix, we will show that similar results hold
for the Gaussian complexity. 
\begin{thm}
Given $m$ independent samples $S=(\vec{z}_1,...,\vec{z}_m)$ and a resource channel 
 $\Psi$, then we have the following bound 
\begin{eqnarray}
\hat{G}_S(\mathcal{F}(\mathcal{O}))
\leq
\hat{G}_S(\mathcal{F}(\mathcal{O}_{\Psi}))
\leq (1+\gamma(\Psi)) \hat{G}_S(\mathcal{F}(\mathcal{O})),
\end{eqnarray}
where $\gamma(\Psi)$ is the free robustness with respect to the set $\mathcal{O}$, that is,
\begin{eqnarray*}
\gamma(\Psi):=\min\left\{\lambda|
\frac{\Psi+\lambda\Phi}{1+\lambda}\in \mathrm{Conv}(\mathcal{O}), \Phi\in \mathrm{Conv}(\mathcal{O})\right\}.
\end{eqnarray*}
 Therefore, for any probability distribution $D$ on the sample space, if each sample $\vec{z}_i$ is chosen independently 
according to $D$ for $i=1,\ldots ,m$, then we have 

\begin{eqnarray}
G_D(\mathcal{F}(\mathcal{O}))
\leq
G_D(\mathcal{F}(\mathcal{O}_{\Psi}))
\leq (1+\gamma(\Psi)) R_D(\mathcal{F}(\mathcal{O})).
\end{eqnarray}

\end{thm}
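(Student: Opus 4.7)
The plan is to mirror the proof of Theorem~\ref{thm:main1} step-for-step, replacing the discrete symmetry of $\{\pm 1\}^m$ with the analogous symmetry of the Gaussian measure $\mu = \mathcal{N}(0,1)^{\otimes m}$ on $\mathbb{R}^m$. The lower bound $\hat{G}_S(\mathcal{F}(\mathcal{O})) \leq \hat{G}_S(\mathcal{F}(\mathcal{O}_\Psi))$ is immediate from $\mathcal{O}\subset\mathcal{O}_\Psi$ and monotonicity of the supremum, so only the upper bound needs work.

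First I would rewrite
\begin{eqnarray*}
\hat{G}_S(\mathcal{F}(\mathcal{O}_\Psi))
= \mathbb{E}_{\vec{\epsilon}\sim\mu}\Br{\sup_{f\in \mathcal{F}(\mathcal{O}_\Psi)}\inner{\vec{\epsilon}}{\vec{f}}},
\end{eqnarray*}
where $\inner{\vec{\epsilon}}{\vec{f}} = \tfrac{1}{m}\sum_i \epsilon_i f(\vec{z}_i)$, and then use the fact that because $\mathcal{F}(\mathcal{O}_\Psi) = \mathcal{F}(\mathcal{O})\cup\{f_\Psi\}$, the supremum equals $\max\{f_\Psi\text{-term},\,\sup_{f\in\mathcal{F}(\mathcal{O})}\inner{\vec{\epsilon}}{\vec{f}}\}$. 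Splitting the expectation according to which branch wins gives the Gaussian analog of the set $A$ in Eq.~\eqref{def:A}:
\begin{eqnarray*}
A := \Big\{\vec{\epsilon}\in\mathbb{R}^m : \inner{\vec{\epsilon}}{\vec{f}_\Psi} > \sup_{f\in\mathcal{F}(\mathcal{O})}\inner{\vec{\epsilon}}{\vec{f}}\Big\}.
\end{eqnarray*}
Since $f_\Psi$ and each $f$ are bounded and $\sup$ of affine functions in $\vec{\epsilon}$ is continuous and convex, $A$ is Borel measurable, so integrating against $\mu$ makes sense.

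Next I would reproduce the analogs of Lemma~\ref{lem:eq} and Lemma~\ref{lem:ineq}. The proof that $A\cap(-A) = \emptyset$ transfers verbatim, as it is a purely algebraic fact about inner products and the inf/sup of $\mathcal{F}(\mathcal{O})$. For the continuous analog of Lemma~\ref{lem:ineq}, let $B = A\cup(-A)$; then $B^c$ is closed under $\vec{\epsilon}\mapsto -\vec{\epsilon}$. Combined with the symmetry $\mu(-E) = \mu(E)$ of the standard Gaussian, this yields for any single fixed $f\in\mathcal{F}(\mathcal{O})$ that $\int_{B^c}\inner{\vec{\epsilon}}{\vec{f}}\,d\mu(\vec{\epsilon}) = 0$, and hence
\begin{eqnarray*}
\int_{B^c}\sup_{f\in\mathcal{F}(\mathcal{O})}\inner{\vec{\epsilon}}{\vec{f}}\,d\mu(\vec{\epsilon})\;\geq\;\int_{B^c}\inner{\vec{\epsilon}}{\vec{f}}\,d\mu(\vec{\epsilon})\;=\;0.
\end{eqnarray*}
Rearranging gives $\int_A\sup_f + \int_{-A}\sup_f \leq \int_{\mathbb{R}^m}\sup_f$, the measure-theoretic counterpart of Lemma~\ref{lem:ineq}.

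Finally I would use the free-robustness decomposition $f_\Psi = (1+\gamma(\Psi))f_{\Phi_1} - \gamma(\Psi)f_{\Phi_2}$ with $\Phi_1,\Phi_2\in\mathrm{Conv}(\mathcal{O})$ (which is valid by linearity of $f_\Phi$ in $\Phi$) and plug it into
\begin{eqnarray*}
\hat{G}_S(\mathcal{F}(\mathcal{O}_\Psi))
= \int_A \inner{\vec{\epsilon}}{\vec{f}_\Psi}\,d\mu + \int_{A^c}\sup_{f\in\mathcal{F}(\mathcal{O})}\inner{\vec{\epsilon}}{\vec{f}}\,d\mu.
\end{eqnarray*}
Replacing $\inner{\vec{\epsilon}}{\vec{f}_{\Phi_1}}\leq \sup_f\inner{\vec{\epsilon}}{\vec{f}}$ on $A$, using $-\inner{\vec{\epsilon}}{\vec{f}_{\Phi_2}} = \inner{-\vec{\epsilon}}{\vec{f}_{\Phi_2}}\leq \sup_f\inner{-\vec{\epsilon}}{\vec{f}}$, changing variables $\vec{\epsilon}\mapsto-\vec{\epsilon}$ in the $\Phi_2$ term (which preserves $\mu$), and invoking the inequality from the previous paragraph, the calculation collapses exactly as in Appendix~\ref{proof:main1} to $(1+\gamma(\Psi))\hat{G}_S(\mathcal{F}(\mathcal{O}))$. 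Taking expectation over $S\sim D^m$ yields the second inequality. The main conceptual obstacle is verifying that the discrete counting arguments of Lemmas~\ref{lem:eq}--\ref{lem:ineq} port correctly to a continuous symmetric measure; but since the only properties of the distribution used in the Rademacher proof are (i) invariance under $\vec{\epsilon}\mapsto-\vec{\epsilon}$ and (ii) measurability of $A$, and both hold for $\mu$, no further ingredients are needed.
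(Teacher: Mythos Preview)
Your proposal is correct and follows exactly the approach sketched in the paper: the paper's proof simply states that one repeats the Rademacher argument verbatim after replacing the set $A\subset\{\pm1\}^m$ by its continuous analog $A'=\{\vec{\epsilon}\in\mathbb{R}^m:\inner{\vec{\epsilon}}{\vec{f}_\Psi}>\sup_{f\in\mathcal{F}(\mathcal{O})}\inner{\vec{\epsilon}}{\vec{f}}\}$, and you have spelled out precisely this substitution, together with the sign-symmetry of the Gaussian measure that replaces the discrete counting in Lemmas~\ref{lem:eq}--\ref{lem:ineq}.
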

\begin{proof}
The proof is the same as that for the Rademacher complexity, except that we will need to replace 
the set $A$ in Lemma \ref{lem:eq} and Lemma \ref{lem:ineq} by
\begin{eqnarray}
A'=\set{\vec{\epsilon}\in\real^m|
\inner{\vec{\epsilon}}{\vec{f}_{\Psi}}>\sup_{ f\in\mathcal{F}(\mathcal{O})} \inner{\vec{\epsilon}}{\vec{f}} 
}.
\end{eqnarray}
\end{proof}

\begin{thm}
Given $m$ independent samples $S=(\vec{z}_1,\ldots,\vec{z}_m)$ and a resource channel 
 $\Psi$, we have the following bound 
\begin{eqnarray}
\hat{G}_S(\mathcal{F}(\mathcal{O}^{(k)}_{\Psi}))
\leq \gamma^*\hat{G}_S(\mathcal{F}(\mathcal{O})),
\end{eqnarray}
where $\gamma^*=\min\set{1+2\gamma_{\max,n}, (1+2\gamma(\Psi))^k}$, and $\gamma_{\max,n}$ is the maximal free robustness over quantum channels on $n$ qubits. Given a probability distribution $D$ on the sample space, if each sample $\vec{z}_i$ chosen independently 
according to $D$ for $i=1,\ldots,m$, then we have 
\begin{eqnarray}
G_D(\mathcal{F}(\mathcal{O}^{(k)}_{\Psi}))
\leq \gamma^*G_D(\mathcal{F}(\mathcal{O})).
\end{eqnarray}

\end{thm}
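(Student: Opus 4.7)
The plan is to mimic the proof of Theorem \ref{thm:main2} step by step, replacing Rademacher complexity by Gaussian complexity throughout. The crucial observation is that the Gaussian complexity enjoys the same three structural properties that drove the Rademacher argument: invariance under convex hulls, $\hat{G}(cA)=|c|\hat{G}(A)$ for $c\in\real$, and additivity under Minkowski sums, $\hat{G}(A_1+A_2)=\hat{G}(A_1)+\hat{G}(A_2)$. Each of these follows by the same short manipulations as in the Rademacher case, since the only distributional features used are linearity of expectation and the symmetry of the underlying random variables around zero, both of which hold for $\mathcal{N}(0,1)$.

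The first step is to establish the Gaussian analog of Lemma \ref{lem:rel_k}, namely
\begin{eqnarray*}
\hat{G}_S\left(\mathcal{F}(\mathcal{O}^{(k+1)}_{\Psi})\right)\leq (1+2\gamma(\Psi))\hat{G}_S\left(\mathcal{F}(\mathcal{O}^{(k)}_{\Psi})\right).
\end{eqnarray*}
As in the Rademacher proof, I would use the free robustness decomposition $\Psi=(1+\gamma(\Psi))\Phi_1-\gamma(\Psi)\Phi_2$ with $\Phi_1,\Phi_2\in \mathrm{Conv}(\mathcal{O})$ to deduce the inclusion
\begin{eqnarray*}
\mathcal{O}^{(k+1)}_{\Psi}\subset (1+\gamma(\Psi))\mathrm{Conv}(\mathcal{O}^{(k)}_{\Psi})-\gamma(\Psi)\mathrm{Conv}(\mathcal{O}^{(k)}_{\Psi}),
\end{eqnarray*}
which, after applying the three structural properties above to $\hat{G}_S$, yields the claimed recurrence. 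Iterating this bound $k$ times gives $\hat{G}_S(\mathcal{F}(\mathcal{O}^{(k)}_{\Psi}))\leq (1+2\gamma(\Psi))^k\,\hat{G}_S(\mathcal{F}(\mathcal{O}))$.

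The second step is the alternative bound in terms of the maximal free robustness $\gamma_{\max,n}$. Exactly as in the proof of Theorem \ref{thm:main2}, every $\Phi\in\mathcal{O}^{(k)}_{\Psi}$ admits a decomposition $\Phi=(1+\gamma)\Phi_1-\gamma\Phi_2$ with $\Phi_1,\Phi_2\in\mathcal{O}$ and $\gamma\leq\gamma_{\max,n}$, so that
\begin{eqnarray*}
\mathcal{O}^{(k)}_{\Psi}\subset (1+\gamma_{\max,n})\mathrm{Conv}(\mathcal{O})-\gamma_{\max,n}\mathrm{Conv}(\mathcal{O}),
\end{eqnarray*}
uniformly in $k$. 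Applying the scaling, additivity, and convex-hull invariance of $\hat{G}_S$ yields $\hat{G}_S(\mathcal{F}(\mathcal{O}^{(k)}_{\Psi}))\leq (1+2\gamma_{\max,n})\,\hat{G}_S(\mathcal{F}(\mathcal{O}))$. Taking the minimum of the two bounds produces the $\gamma^*$ factor.

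The final step is to pass to the expected Gaussian complexity by taking the expectation over $S\sim D^m$ on both sides; this preserves the inequality since the bound holds pointwise in $S$. The main obstacle, if any, is simply to justify carefully that the three structural properties of Rademacher complexity carry over verbatim to Gaussian complexity; once these are in hand (either cited from standard references or verified by repeating the short proofs using the symmetry of $\mathcal{N}(0,1)$), the rest of the argument is a line-by-line transcription of the Rademacher proof.
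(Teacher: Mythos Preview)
Your proposal is correct and follows exactly the same approach as the paper: the paper's proof is essentially a one-line remark that the Gaussian complexity satisfies the same convexity and invariance-under-convex-combination properties as the Rademacher complexity, so the argument of Theorem~\ref{thm:main2} transfers verbatim. Your write-up simply spells this out in more detail than the paper does.
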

\begin{proof}
This result also holds for the Gaussian complexity because the Gaussian 
complexity also satisfies convexity and invariance under convex combination.
\end{proof}

\section{Alternative definition of Rademacher and Gaussian complexity involving absolute values}
Given a set of real-valued functions $\mathcal{F}$, the Rademacher and Gaussian complexity with respect to a given 
sample $S=(z_1,....,z_m)$
may alternatively be defined as follows: 
\begin{eqnarray}
\bar{R}_S(\mathcal{F})=
\mathbb{E} \sup_{f\in \mathcal{F}}
\frac{1}{m}\left|\sum^m_{i=1}\epsilon_i f(z_i)\right|,
\end{eqnarray}
where the expectation is taken over the i.i.d Rademacher variables $\epsilon_1,\epsilon_2,...,\epsilon_m$, and
\begin{eqnarray}
\bar{G}_S(\mathcal{F})=
\mathbb{E} \sup_{f\in \mathcal{F}}
\frac{1}{m}\left|\sum^m_{i=1}g_i f(z_i)\right|,
\end{eqnarray}
where the expectation is taken over i.i.d random Gaussian variables with zero mean and variance 1, i.e., $g_i\sim \mathcal{N}(0,1)$.

As the only difference between  $\hat{R}_S$ and $\bar{R}_S$ is that the latter involves taking an absolute value $|\cdot|$, it is easy to see that $\bar{R}_S$ and $\bar{G}_S$
 satisfy
the following bounds. 
\begin{thm}
Given $m$ independent samples $S=(\vec{z}_1,...,\vec{z}_m)$ and a resource channel 
 $\Psi$, we have the following bound 
\begin{eqnarray}
\bar{R}_S(\mathcal{F}(\mathcal{O}))
\leq
\bar{R}_S(\mathcal{F}(\mathcal{O}^{(k)}_{\Psi}))
\leq \gamma^*\bar{R}_S(\mathcal{F}(\mathcal{O})),\
\bar{G}_S(\mathcal{F}(\mathcal{O}))
\leq
\bar{G}_S(\mathcal{F}(\mathcal{O}^{(k)}_{\Psi}))
\leq \gamma^* \bar{G}_S(\mathcal{F}(\mathcal{O})),
\end{eqnarray}
where $\gamma^*= \min\set{(1+2\gamma(\Psi))^k , 1+2\gamma_{\max,n}}$ and $\gamma_{\max,n}$ is the maximal free robustness over quantum channels on $n$ qubits. 
\end{thm}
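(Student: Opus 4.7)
The plan is to mirror the proof of Theorem~\ref{thm:main2}, which proceeds through Lemma~\ref{lem:rel_k} followed by iteration, after checking that the few algebraic properties of $\hat R_S$ used there continue to hold for $\bar R_S$ and $\bar G_S$. The lower bounds $\bar R_S(\mathcal F(\mathcal O)) \leq \bar R_S(\mathcal F(\mathcal O^{(k)}_\Psi))$ and the analogous Gaussian inequality are immediate from $\mathcal F(\mathcal O) \subset \mathcal F(\mathcal O^{(k)}_\Psi)$ together with monotonicity of the supremum, so the real task is the upper bound.

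I would first verify three properties of $\bar R_S$ (the case of $\bar G_S$ being identical with the standard Gaussian expectation replacing Rademacher): (i) $\bar R_S(\mathrm{Conv}(A)) = \bar R_S(A)$, because for each fixed $\vec\epsilon$ the map $\vec v \mapsto |\langle \vec\epsilon, \vec v\rangle|$ is a seminorm and hence convex, so its supremum over a convex hull coincides with its supremum over the original set; (ii) $\bar R_S(cA) = |c|\,\bar R_S(A)$, by scalar homogeneity of $|\cdot|$ (noting in passing that $\bar R_S(-A) = \bar R_S(A)$); and (iii) $\bar R_S(A_1 + A_2) \leq \bar R_S(A_1) + \bar R_S(A_2)$, from the triangle inequality $|\langle \vec\epsilon,\vec v_1+\vec v_2\rangle| \leq |\langle \vec\epsilon,\vec v_1\rangle| + |\langle \vec\epsilon,\vec v_2\rangle|$ applied inside the supremum. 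Relative to $\hat R_S$, only (iii) is weakened from an equality to an inequality, but this is precisely the direction needed for an upper bound.

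With these in hand, the argument of Lemma~\ref{lem:rel_k} transcribes directly. The set-level inclusion $\mathcal O^{(k+1)}_\Psi \subset (1+\gamma(\Psi))\,\mathrm{Conv}(\mathcal O^{(k)}_\Psi) - \gamma(\Psi)\,\mathrm{Conv}(\mathcal O^{(k)}_\Psi)$ --- which comes purely from the free-robustness decomposition of $\Psi$ and does not invoke any property of the complexity measure --- passes, by linearity of $f_\Phi$ in $\Phi$, to the function-class inclusion $\mathcal F(\mathcal O^{(k+1)}_\Psi) \subset (1+\gamma(\Psi))\,\mathrm{Conv}(\mathcal F(\mathcal O^{(k)}_\Psi)) - \gamma(\Psi)\,\mathrm{Conv}(\mathcal F(\mathcal O^{(k)}_\Psi))$. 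Applying $\bar R_S$ and invoking (i)--(iii) yields $\bar R_S(\mathcal F(\mathcal O^{(k+1)}_\Psi)) \leq (1+2\gamma(\Psi))\,\bar R_S(\mathcal F(\mathcal O^{(k)}_\Psi))$, and iterating $k$ times produces the $(1+2\gamma(\Psi))^k$ bound. The companion bound $1+2\gamma_{\max,n}$ follows, as in the second half of the proof of Theorem~\ref{thm:main2}, by applying the free-robustness decomposition once globally to an arbitrary $\Phi \in \mathcal O^{(k)}_\Psi$. Taking the minimum gives $\gamma^*$, and the Gaussian half of the theorem follows by the identical chain of reasoning with $\bar G_S$ replacing $\bar R_S$ throughout.

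I do not anticipate any serious obstacle here; the main structural reason this goes through is that the free-robustness decomposition is a fact about channels and functions, not about the averaging variable used to define the complexity measure. The only step worth more than a one-line verification is (i), and even there the justification is the standard convex-analysis fact that a convex function on a convex hull attains its supremum at an extreme point.
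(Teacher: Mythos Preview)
Your proposal is correct and follows essentially the same route as the paper, which simply states that the proof is the same as that of Theorem~\ref{thm:main2}. Your explicit verification that convex-hull invariance, absolute homogeneity, and (sub)additivity hold for $\bar R_S$ and $\bar G_S$ is a welcome clarification, and your observation that additivity weakens to subadditivity in the absolute-value setting---but that this is still the inequality direction needed for the upper bound---is exactly the point that makes the transcription go through.
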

The proof is the same as that of Theorem \ref{thm:main2}.

\end{document}